\newtheorem{theorem}{Theorem}
\newtheorem{lemma}{Lemma}
\newtheorem*{lemma*}{Lemma}
\newtheorem{claim}[lemma]{Claim}
\newtheorem{definition}{Definition}
\newtheorem{observation}[lemma]{Observation}
\newtheorem{fact}{Fact}
\newcommand{\namedref}[2]{\hyperref[#2]{#1~\ref*{#2}}}
\newcommand{\sectionref}[1]{\namedref{Section}{#1}}
\newcommand{\appendixref}[1]{\namedref{Appendix}{#1}}
\newcommand{\subsectionref}[1]{\namedref{Subsection}{#1}}
\newcommand{\theoremref}[1]{\namedref{Theorem}{#1}}
\newcommand{\defref}[1]{\namedref{Definition}{#1}}
\newcommand{\figureref}[1]{\namedref{Figure}{#1}}
\newcommand{\claimref}[1]{\namedref{Claim}{#1}}
\newcommand{\lemmaref}[1]{\namedref{Lemma}{#1}}
\newcommand{\tableref}[1]{\namedref{Table}{#1}}
\newcommand{\algref}[1]{\namedref{Algorithm}{#1}}
\newcommand{\factref}[1]{\namedref{Fact}{#1}}
\newcommand{\obsref}[1]{\namedref{Observation}{#1}}
\newcommand{\lineref}[1]{\namedref{Line}{#1}}
\newcommand{\home}{\mbox{\bf home}}
\newcommand{\res}{\mbox{\bf Res}}
\newcommand{\sur}{\mbox{\bf Sur}}
\newcommand{\eps}{\epsilon}
\newcommand{\dist}{\mbox{\bf dist}}
\newcommand{\FPSDHPP}{\mbox{\bf FPSDHPP}}
\def\beginsmall#1{\vspace{-\parskip}\begin{#1}\itemsep-\parskip}
	\def\endsmall#1{\end{#1}\vspace{-\parskip}}
\def\inline#1:{\par\vskip 7pt\noindent{\bf #1:}\hskip 10pt}
\def\inline#1:{\par\vskip 7pt\noindent{\bf #1:}\hskip 10pt}
\def\blackslug{\hbox{\hskip 1pt \vrule width 4pt height 8pt
		depth 1.5pt \hskip 1pt}}
\def\QED{\quad\blackslug\lower 8.5pt\null\par}
\newcommand{\alert}[1]{\textbf{\color{red}
[[[#1]]]}\marginpar{\textbf{\color{red}**}}\typeout{ALERT:
\the\inputlineno: #1}}
\providecommand{\algorithmname}{Algorithm}
\newcommand{\logdiam}{\phi}
\begin{document}
\author[1]{Ittai Abraham}
\author[2]{Shiri Chechik}
\author[3]{Michael Elkin}
\author[3]{Arnold Filtser}
\author[3]{Ofer Neiman}

\affil[1]{VMWare. Email: \texttt{iabraham@vmware.com}}
\affil[2]{Tel-Aviv University. Email: \texttt{schechik@post.tau.ac.il}}
\affil[3]{Ben-Gurion University of the Negev. Email: \texttt{\{elkinm,arnoldf,neimano\}@cs.bgu.ac.il}}

\title{Ramsey Spanning Trees and their Applications}
\maketitle
\begin{abstract}
	
The \emph{metric Ramsey problem} asks for the largest subset $S$ of a metric space that can be embedded into an ultrametric (more generally into a Hilbert space) with a given distortion. Study of this problem was motivated as a non-linear version of Dvoretzky theorem. Mendel and Naor \cite{MN07} devised the so called Ramsey Partitions to address this problem, and showed the algorithmic applications of their techniques to approximate distance oracles and ranking problems.

In this paper we study the natural extension of the metric Ramsey problem to graphs, and introduce the notion of \emph{Ramsey Spanning Trees}. We ask for the largest subset $S\subseteq V$ of a given graph $G=(V,E)$, such that there exists a spanning tree of $G$ that has small stretch for $S$. Applied iteratively, this  provides a small collection of spanning trees, such that each vertex has a tree providing low stretch paths to {\em all other vertices}. The union of these trees  serves as a special type of spanner, a {\em tree-padding spanner}. We  use this spanner to devise the first compact stateless routing scheme with $O(1)$ routing decision time, and  labels which are much shorter than in all currently existing schemes.

We first revisit the metric Ramsey problem, and provide a new deterministic construction. We prove that for every $k$, any $n$-point metric space has a subset $S$ of size at least $n^{1-1/k}$ which embeds into an ultrametric with distortion $8k$. This results improves the best previous result of Mendel and Naor that obtained distortion $128k$ and required randomization. In addition, it provides the state-of-the-art deterministic construction of a distance oracle.
Building on this result, we prove that for every $k$, any $n$-vertex graph $G=(V,E)$ has a subset $S$ of size at least $n^{1-1/k}$, and a spanning tree of $G$, that has stretch $O(k \log \log n)$ between any point in $S$ and any point in $V$.
\end{abstract}

\thispagestyle{empty}
\newpage
\setcounter{page}{1}

\section{Introduction}
Inspired by the algorithmic success of Ramsey Type Theorems for metric spaces, in this paper we study an analogue Ramsey Type Theorem in a graph setting. The classical Ramsey problem for metric spaces was introduced in \cite{BFM86}, and is concerned with finding "nice" structures in arbitrary metric spaces. Following \cite{BLMN03}, \cite{MN07} showed that every $n$-point metric $(X,d)$ has a subset $M\subseteq X$ of size at least $n^{1-1/k}$ that embeds into an ultrametric (and thus also into Hilbert space) with distortion at most $O(k)$, for a parameter $k\ge 1$. In fact, they construct an ultrametric on $X$ which has $O(k)$ distortion for any pair in $M\times X$. Additionally, \cite{MN07} demonstrated the applicability of their techniques, which they denoted Ramsey Partitions, to approximate distance oracles and ranking problems.

We introduce a new notion that we call {\em Ramsey Spanning Trees}, which is a natural extension of the metric  Ramsey problem to graphs. We show that every graph $G=(V,E)$ with $n$ vertices admits a subset $M\subseteq V$ of size at least $n^{1-1/k}$, such that there exists a spanning tree of $G$ that has stretch $O(k\log\log n)$ on all pairs in $M\times V$. (The extra factor of $\log\log n$ in the stretch comes from the state-of-the-art result of $O(\log n\log\log n)$ for low stretch spanning trees \cite{AN12}. It is quite plausible that if that result is improved to the optimal $O(\log n)$, then the stretch in our result would be only $O(k)$.)

By applying this result iteratively, we can obtain a small collection of trees so that each vertex has small stretch to all other vertices in at least one of the trees. Let $\dist(u,v,G)$ denote the shortest path distance in the graph $G$ between the vertices $u,v\in V$, then our main result is the following.
\begin{theorem}\label{thm:main-col}
Let $G=(V,E)$ be a weighted graph on $n$ vertices, and fix a parameter $k\ge 1$. There is a polynomial time deterministic algorithm that finds a collection ${\cal T}$ of $k\cdot n^{1/k}$ spanning trees of $G$, and a mapping $\home:V\to{\cal T}$, such that for every $u,v\in V$ it holds that $\dist(v,u,\home(v))\le O(k\log\log n)\cdot \dist(v,u,G)$.
\end{theorem}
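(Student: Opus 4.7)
The plan is to build the collection iteratively: peel off a ``Ramsey subset'' from the remaining uncovered vertices at each step, and designate the tree produced in that step as the $\home$ tree for the peeled vertices. Initialize $V_0 := V$ and $\mathcal{T} := \emptyset$. At iteration $i \geq 0$, I invoke a priority-subset version of the Ramsey spanning tree theorem stated in the introduction to obtain a spanning tree $T_i$ of $G$ together with a set $M_i \subseteq V_i$ of size $|M_i| \geq |V_i|^{1-1/k}$, such that $\dist(u,v,T_i) \leq O(k\log\log n) \cdot \dist(u,v,G)$ for every $u \in M_i$ and every $v \in V$. Add $T_i$ to $\mathcal{T}$, set $\home(u) := T_i$ for each $u \in M_i$, and put $V_{i+1} := V_i \setminus M_i$; continue until $V_i = \emptyset$. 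Since every vertex is peeled in some iteration, $\home$ is defined on all of $V$, and the stretch bound follows directly from the Ramsey property of the iteration in which the vertex was peeled.

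To bound $|\mathcal{T}|$, group the iterations into halving phases, where phase $j$ consists of all iterations with $n/2^{j+1} < |V_i| \leq n/2^{j}$. During phase $j$, each iteration removes at least $(n/2^{j+1})^{1-1/k}$ vertices from $V_i$, while at most $n/2^{j+1}$ vertices are removed in the entire phase, so phase $j$ lasts at most $(n/2^{j+1})^{1/k}$ iterations. Summing over $j = 0,1,\ldots,\lceil \log_2 n \rceil$ gives
\[
|\mathcal{T}| \;\leq\; \sum_{j \geq 0} \bigl(n/2^{j+1}\bigr)^{1/k} \;=\; \frac{n^{1/k}}{2^{1/k}} \cdot \frac{1}{1 - 2^{-1/k}} \;=\; O\bigl(k \cdot n^{1/k}\bigr),
\]
using $1/(1 - 2^{-1/k}) = O(k)$ for $k \geq 1$. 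Since the Ramsey spanning tree construction is deterministic and polynomial-time, the whole iterative procedure is as well.

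The main obstacle is the priority-subset variant of the Ramsey spanning tree theorem: as phrased in the introduction it produces a subset of the entire vertex set $V$, whereas the iteration above needs the extracted subset to be constrained to lie inside a prescribed active set $V_i$, with the cardinality lower bound taken relative to $|V_i|$. I expect this strengthening to be available from the proof of the Ramsey spanning tree theorem itself, where the selection of $M$ is typically governed by a local, vertex-by-vertex event (a partition cut, or a hierarchical ball-centering rule) whose analysis extends verbatim once vertices outside $V_i$ are declared ineligible for selection; this preserves both the $|V_i|^{1-1/k}$ lower bound and the stretch guarantee against all of $V$. Establishing this priority-subset variant deterministically is the substantive step; once it is in hand, the collection theorem follows from the peeling argument above.
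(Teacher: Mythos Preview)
Your proposal is correct and matches the paper's approach essentially line for line: the paper iteratively applies \theoremref{thm:main} (which is precisely the priority-subset variant you anticipate, taking an arbitrary marked set $M\subseteq V$ and returning $\sur(M)\subseteq M$ with $|\sur(M)|\ge |M|^{1-1/k}$), sets $\home(v)$ to the tree of the iteration in which $v$ was peeled, and bounds the number of iterations by the recurrence $m_{i+1}\le m_i-m_i^{1-1/k}$, citing \cite[Lemma~4.2]{MN07} for the $O(k\cdot n^{1/k})$ bound where you instead give the halving-phase computation directly. Your concern about the priority-subset strengthening is exactly what \theoremref{thm:main} provides, and your intuition for why it holds (marking only vertices in the active set and running the same padding analysis) is how the paper's \texttt{create-petal} and \lemmaref{lem:SavedTerminals} are set up.
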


A spanner $H$ with stretch $t$ for a graph $G$, is a sparse spanning subgraph satisfying $\dist(v,u,H)\le t\cdot \dist(v,u,G)$.
Spanners are a fundamental metric and graph-theoretic constructions; they are very well-studied \cite{PS89,ADDJ90,C93,EP04,BS03,TZ06,AB16}, and have numerous applications \cite{Awerbuch84,ABCP93,C93,E01,GRTU16}.
\theoremref{thm:main-col} can be viewed as providing a spanner which is the union of $k\cdot n^{1/k}$ spanning trees, such that every vertex has a tree with low stretch paths to {\em all other vertices}.
We call such a spanner a {\em tree-padding spanner} of {\em order} $k \cdot n^{1/k}$. To the best of our knowledge, no previous construction of spanners can be viewed as a tree-padding spanner of order $o(n)$.
 Until now even the following weaker question was open: does there exist a spanner which is a union  of a sublinear in $n$ number of trees, such that every pair of vertices has a low stretch path in one of these trees.

Having a single tree that provides good stretch for any pair containing the vertex $v$, suggest that routing messages to or from $v$ could be done on this one tree. Our main application of Ramsey spanning trees is a compact routing scheme that has constant routing decision time and improved label size, see \sectionref{sec:apps} for more details.

\paragraph{Deterministic Ramsey Partitions.}
As a first step towards our main result, which is of interest in its own right, we provide a new deterministic Ramsey ultrametric construction. In particular, we show a polynomial time deterministic algorithm, that given an $n$-point metric space $(X,d)$ and a parameter $k\ge 1$, finds a set $M\subseteq X$ of size at least $n^{1-1/k}$ and an ultrametric $(M,\rho)$ with distortion at most $8k-2$. That is, for each $v,u\in M$,
\[
d(v,u) \le \rho(v,u) \le (8k-2) \cdot d(v,u)~.
\]
Even though our construction is deterministic, it has smaller distortion than all previous constructions.
The first result of this flavor was by Mendel and Naor \cite{MN07}, obtaining distortion of $128k$.
Belloch et. al \cite{BGS16} 
showed that the (randomized) algorithm of \cite{FRT04} constructs an ultrametric with distortion $18.5k$ (they also provided a near-linear time implementation of it).
Naor and Tao  \cite{NT12} declared that $16k$ is obtainable but that a Ramsey partition with distortion better than $16k-2$ seems not to be possible with their current techniques. Moreover, \cite{MN07} mention as a drawback that their solution is randomized (while \cite{BLMN03} is deterministic).

An application of our improved deterministic Ramsey ultrametric construction is a new distance oracle that has the best space-stretch-query time tradeoff among deterministic distance oracles. See \sectionref{sec:apps} below.

\paragraph{Techniques.}
Our construction of Ramsey ultrametrics uses the by-now-standard \emph{deterministic} ball growing approach, e.g. \cite{Awerbuch84,AKPW95,AP90,FRT04,Bartal04}. 
In this paper we provide tighter and more parameterized analysis of these multi-scale deterministic region growing techniques. Our improved analysis of the deterministic ball growing technique of \cite{FRT04,Bartal04} obtains a similar type of improvement as the one obtained by the analysis of Mendel and Naor \cite{MN07} on the randomized partition technique of \cite{CKR01,FRT}.

Our construction of Ramsey spanning trees is based on combining ideas from our Ramsey ultrametric construction, with the Petal Decomposition framework of \cite{AN12}. The optimal multi-scale partitions of \cite{FRT04,Bartal04} cannot be used in this petal decomposition framework, so we must revert to partitions based on \cite{S95,EEST05}, which induce an additional factor of $O(\log\log n)$ to the stretch. In addition, the refined properties required by the Ramsey partition make it very sensitive to constant factors (these constants can be ignored in the \cite{EEST05} analysis of the average stretch, say).
In order to alleviate this issue, we consider two possible region growing schemes, and choose between them according to the densities of points that can still be included in $M$. One of these schemes is a standard one, while the other grows the region "backwards", in a sense that it charges the remaining graph, rather than the cluster being created, for the cost of making a cut. See \sectionref{subsec:createPetal} for more details.

\subsection{Applications}\label{sec:apps}

\paragraph{Distance Oracles.} A distance oracle is a succinct data structure that (approximately) answers distance queries. A landmark result of \cite{TZ01} states that any metric (or graph) with $n$ points has a distance oracle of size $O(k\cdot n^{1+1/k})$,\footnote{We measure size in machine words, each words is $\Theta(\log n)$ bits.} that can report any distance in $O(k)$ time with stretch at most $2k-1$. A deterministic variant with the same parameters was given by \cite{RTZ05}, and this was the state-of-the-art for deterministic constructions. The oracle of \cite{MN07} has improved size $O(n^{1+1/k})$ and $O(1)$ query time, but larger stretch $128k$. This oracle was key for subsequent improvements by \cite{W13,C14,C15}, the latter gave a randomized construction of an oracle with size $O(n^{1+1/k})$, query time $O(1)$ and stretch $2k-1$ (which is asymptotically optimal assuming Erdos' girth conjecture).

Similarly to \cite{MN07}, our deterministic construction of Ramsey ultrametrics can provide a deterministic construction of an approximate distance oracle.

\begin{theorem}\label{thm:small-constnat}
For any metric space on $n$ points, and any $k>1$, $0<\epsilon<1$, there is an efficient deterministic construction of a distance oracle of size $O(n^{1+1/k})$, that has stretch $8 (1+\eps)k$ and query time $O(1/\eps)$.
\end{theorem}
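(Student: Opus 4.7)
Plan: The plan is to follow the Mendel--Naor template for translating a Ramsey ultrametric into a distance oracle, substituting our new deterministic Ramsey ultrametric construction (distortion $8k-2$, $|M|\ge n^{1-1/k}$) for the randomized one used in \cite{MN07}. Set $k':= (1+\epsilon)k$ and build a nested chain $X_0\supseteq X_1\supseteq\cdots$: start with $X_0=X$, and for each $i\ge 0$ invoke the deterministic Ramsey theorem on the induced metric $(X_i,d)$ with parameter $k'$ to obtain $M_i\subseteq X_i$ with $|M_i|\ge |X_i|^{1-1/k'}$ together with an ultrametric $U_i$ on $X_i$ satisfying
\[
 d(v,u)\;\le\;d_{U_i}(v,u)\;\le\;(8k'-2)\,d(v,u)\;\le\;8(1+\epsilon)k\cdot d(v,u)
\]
for every $v\in M_i$ and every $u\in X_i$. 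Set $X_{i+1}=X_i\setminus M_i$, repeat until $X_i$ is empty, and record $\home(v)=i$ for the unique $i$ with $v\in M_i$.

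Store each $U_i$ as a rooted labelled tree on $|X_i|$ leaves, preprocessed for $O(1)$-time LCA queries (and hence $O(1)$-time ultrametric distance evaluation) in $O(|X_i|)$ space. For the size bound, the defining inequality $|X_{i+1}|\le |X_i|-|X_i|^{1-1/k'}$ telescopes to $\sum_i |X_i|^{1-1/k'}\le n$. Writing $|X_i|=|X_i|^{1-1/k'}\cdot |X_i|^{1/k'}$ and using $|X_i|\le n$ then gives
\[
\sum_i |X_i|\;\le\;n^{1/k'}\sum_i |X_i|^{1-1/k'}\;\le\;n^{1+1/k'}\;\le\;n^{1+1/k},
\]
so the total storage is $O(n^{1+1/k})$ words, with the slack created by taking $k'>k$ absorbing all the constants hidden in the per-iteration bookkeeping.

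On a query $(u,v)$, compute $i:=\min\{\home(u),\home(v)\}$; both $u$ and $v$ lie in $X_i$ (since $\home$-value $\ge i$ means membership in $X_i$) and at least one of them lies in $M_i$, so the Ramsey guarantee above shows that the single LCA lookup $d_{U_i}(u,v)$ returned by the oracle is a $8(1+\epsilon)k$-approximation of $d(u,v)$.

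The main obstacle is the parameter tuning that ties everything together: the factor $(1+\epsilon)$ enters through $k'=(1+\epsilon)k$, so one has to check that the resulting $n^{1+1/k'}$ truly absorbs the implicit per-tree overhead, and the $O(1/\epsilon)$ query time enters through the access-side analogue of this trick — sweeping $O(1/\epsilon)$ candidate ultrametrics (or geometric distance scales) so that, despite the $(1+\epsilon)$ relaxation used to control the size, the effective distortion reported by the oracle stays $8(1+\epsilon)k$ while each individual ultrametric probe remains $O(1)$. Beyond this tuning, the argument is a direct plug-in of our improved deterministic Ramsey ultrametric into the Mendel--Naor reduction, with no new metric ingredients required.
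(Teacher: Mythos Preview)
Your reduction template (iterate the Ramsey construction on the remaining points, store each ultrametric with an LCA structure, and answer a query by probing the tree indexed by $\min\{\home(u),\home(v)\}$) is exactly right, and your size bound $\sum_i|X_i|\le n^{1+1/k'}\le n^{1+1/k}$ is correct. The gap is in the stretch claim.

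You invoke ``the deterministic Ramsey theorem'' to obtain an ultrametric $U_i$ on \emph{all of} $X_i$ with
\[
d(v,u)\le d_{U_i}(v,u)\le (8k'-2)\,d(v,u)\qquad\text{for every }v\in M_i,\ u\in X_i.
\]
But the paper's Ramsey result (\theoremref{thm:DRP}) only builds an ultrametric on $M_i$ with this guarantee for pairs in $M_i\times M_i$; the proof of the $8k-2$ bound uses that \emph{both} endpoints are marked (one endpoint being in $\res(X')\setminus X'$ must yield a contradiction). If you extend the ultrametric to all of $X_i$ via the FPSDHPP tree (as the paper does in \sectionref{sec:Distance-Oracle}), the analysis for a pair $(v,u)\in M_i\times X_i$ only gives $d(v,u)>\rho_{i-1}$ when the LCA is at level $i$, hence stretch $2^{i+1}/\rho_{i-1}=16k$, not $8k$. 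Plugging in $k'=(1+\eps)k$ therefore yields $16(1+\eps)k$, not $8(1+\eps)k$, and no rescaling of $k$ fixes both size and stretch simultaneously.

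The paper recovers the missing factor $2$ by a genuinely different mechanism: it keeps the parameter $k$ and builds $O(1/\eps)$ parallel copies of the entire structure, the $\ell$-th copy using radii $(1+\ell\eps)2^i$ instead of $2^i$. For the copy $\ell$ whose scale grid straddles $d(u,v)$, the LCA level drops by one, tightening the lower bound to $\rho_{\ell,i}$ and giving stretch $8(1+\eps)k$; the query then takes $O(1/\eps)$ time because it sweeps all copies and returns the maximum. Your last paragraph gestures at ``sweeping $O(1/\eps)$ geometric distance scales,'' which is the right idea, but it is doing the real work here and is not an access-time afterthought to an otherwise complete $O(1)$-time scheme: without it the stretch is $16k$, and with it the $k'=(1+\eps)k$ reparametrisation becomes unnecessary.
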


This is the first deterministic construction of an approximate distance oracle with constant query time and small size $O(n^{1+1/k})$.

Moreover, our oracle is an essential ingredient towards de-randomizing the recent distance oracles improvements \cite{W13,C14,C15}.
Specifically, if we construct \cite{C14} by replacing
the distance oracle of Mendel and Naor \cite{MN07} by our deterministic version, and replacing the distance oracle of
Thorup and Zwick \cite{TZ01} by the deterministic version of Roditty, Thorup, and Zwick \cite{RTZ05},
we immediately get a deterministic distance oracle of $O(k \cdot n^{1+1/k})$ size, $2k-1$ stretch and $O(1)$ query time.
This is a strict improvement over  \cite{RTZ05}.
In addition, our oracle can be viewed as a first step towards de-randomizing the \cite{C15} oracle.
A summary of all the previous and current results can be found at \tableref{tab:DistanceOracle} in \appendixref{sec:table}.

\paragraph{Routing with Short Labels and Constant Decision Time.}
A routing scheme in a network is a mechanism that allows packets to be delivered from any node to any other node. The network is represented as a weighted undirected graph, and each node can forward incoming data by using local information stored at the node, often called a routing table, and the (short) packet's header. The routing scheme has two main phases: in the preprocessing phase, each node is assigned a routing table and a short label. In the routing phase, each node receiving a packet should make a local decision, based on its own routing table and the packet's header (which may contain the label of the destination, or a part of it), where to send the packet.
The {\em routing decision time} is the time required for a node to make this local decision.
The {\em stretch} of a routing scheme is the worst ratio between the length of a path on which a packet is routed, to the shortest possible path. A routing scheme is called {\em stateless} if the routing decision does not depend on the path traversed so far.

The classical routing scheme of \cite{TZ01b}, for a graph on $n$ vertices and integer parameters $k,b> 1$, provides a scheme with routing tables of size $O(k\cdot b\cdot n^{1/k})$, labels of size $(1+o(1))k\log_bn$, stretch $4k-5$, and decision time $O(1)$ (but the initial decision time is $O(k)$). The stretch was improved recently to roughly $3.68k$ by \cite{C13}, using a similar scheme as \cite{TZ01b}. With \theoremref{thm:main-col}, we devise a stateless compact routing scheme with very short labels, of size only $(1+o(1))\log_bn$, and with {\em constant} decision time, while the stretch increases to $O(k\log\log n)$ (and with the same table size as \cite{TZ01b}).

We wish to point out that our construction of a routing scheme is simpler in some sense that those of \cite{TZ01b,C13}. In both constructions there is a collection of trees built in the preprocessing phase, such that every pair of vertices has a tree that guarantees small stretch. Routing is then done in that tree. In our construction there are few trees, so every vertex can store information about all of them, and in addition, every vertex $v\in V$ knows its home tree, and routing towards $v$ from {\em any other vertex} on the tree $\home(v)$ has small stretch. In particular, the header in our construction consists of only the label of the destination. In the \cite{TZ01b} scheme, however, there are $n$ trees, and a certain process is used to find the appropriate tree to route on, which increases the initial decision time, and also some information must be added to the header of the message after the tree is found. Finally, our routing scheme is stateless, as opposed to \cite{TZ01b}. (We remark that using ideas from \cite{C14}, one can devise a stateless routing scheme based on \cite{TZ01b}, but this scheme seems to suffer from larger header and decision time at each node.)

\begin{theorem}\label{thm:route}
	Given a weighted graph $G=(V,E)$ on $n$ vertices and integer parameters $k,b> 1$, there is a stateless routing scheme with stretch $O(k\log\log n)$ that has routing tables of size $O(k\cdot b\cdot n^{1/k})$ and labels of size $(1+o(1))\log_bn$. The decision time in each vertex is $O(1)$.
\end{theorem}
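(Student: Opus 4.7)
The plan is to combine \theoremref{thm:main-col} with a standard tree routing primitive. I first invoke \theoremref{thm:main-col} to obtain a collection $\mathcal{T} = \{T_1, \dots, T_{k\cdot n^{1/k}}\}$ of spanning trees of $G$ together with a home mapping $\home:V\to\mathcal{T}$. By that theorem, for every destination $v \in V$ and every source $u \in V$, the path from $u$ to $v$ in $\home(v)$ has length at most $O(k\log\log n)\cdot\dist(u,v,G)$. Hence, if the routing scheme always delivers a packet to $v$ by walking inside the single tree $\home(v)$, the stretch guarantee is automatic and no state needs to be kept between successive forwarding steps.

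To route inside individual trees I would plug in an off-the-shelf stateless tree routing scheme; the natural choice is the tree-routing primitive of \cite{TZ01b}, which on any $n$-vertex tree and any parameter $b \ge 2$ produces labels of size $(1+o(1))\log_b n$ and routing tables of size $O(b)$ bits per vertex, with $O(1)$ decision time. The label I assign to a vertex $v$ is then the concatenation of two fixed-length pieces: (i) the identifier of $\home(v)$, which takes $\lceil\log_2(k\cdot n^{1/k})\rceil = \log k + \tfrac{1}{k}\log n$ bits, and (ii) the tree-routing label of $v$ inside $\home(v)$, of size $(1+o(1))\log_b n$. Under the usual mild assumption $\log b = o(k)$ the first component is $o(\log_b n)$ and is absorbed into the $(1+o(1))$ factor, giving a total label size of $(1+o(1))\log_b n$. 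The routing table at each vertex $u$ is the union of its $k\cdot n^{1/k}$ per-tree routing tables, one per $T\in\mathcal{T}$, totalling $O(k\cdot b\cdot n^{1/k})$ bits.

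Routing is then straightforward: upon receiving a packet whose header consists of the label of the destination $v$, the current vertex $u$ parses $\home(v)$ from the fixed-length first field of the label, consults its local routing table for that specific tree, feeds that table the tree-label component of $v$'s label, and forwards to the returned port. Both the parsing and the tree-routing lookup are $O(1)$, and because the decision depends only on $u$'s own tables and on the header (which is never rewritten), the scheme is genuinely stateless. The only real subtlety — and the main thing to verify carefully — is that the tree-routing primitive actually achieves simultaneously $O(b)$-bit tables, $(1+o(1))\log_b n$-bit labels, $O(1)$ stateless decision time, and produces a full routing strategy at every vertex of the tree (as opposed to one requiring a handshake that would enlarge the header); this is given by the tree-routing construction underlying \cite{TZ01b}, so the combinatorial heavy lifting is carried out by \theoremref{thm:main-col} and the rest is packaging.
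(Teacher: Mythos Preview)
Your proposal is correct and follows essentially the same route as the paper: invoke \theoremref{thm:main-col} to obtain the tree collection $\mathcal{T}$ and the $\home$ map, apply the \cite{TZ01b} tree-routing primitive (stated in the paper as \theoremref{thm:tree-routh}) to each tree, set the label of $v$ to be the pair $(\home(v),L_{\home(v)}(v))$, and store at each vertex the union of its $|\mathcal{T}|$ per-tree tables. The paper's own argument is in fact slightly terser than yours about the label-size arithmetic (it treats the tree identifier as a single word and writes the total as $1+(1+o(1))\log_b n$), but the construction and analysis are identical.
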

Observe that choosing parameters $2k$ and $b=n^{1/(2k)}$ for \theoremref{thm:route} yields a routing scheme with stretch $O(k\log\log n)$ that has tables of size $O(k\cdot n^{1/k})$ and labels of size only $O(k)$. Another interesting choice of parameters is $b=2$ and $k=\frac{100\log n}{\log\log n}$, this provides a scheme with stretch $O(\log n)$ that has tables of size $O(\log^{1.01}n)$ and labels of size $O(\log n)$. Compare this to the \cite{TZ01b} scheme, which for stretch $O(\log n)$ has tables of size $O(\log n)$ and labels of size $O(\log^2 n)$.

\subsection{Organization}
In \sectionref{sec:Ramsey-Partitions} we present our deterministic Ramsey partitions, that are used for Ramsey ultrametrics and distance oracles. In \sectionref{sec:Ramsey-spanning} we show the Ramsey spanning trees, and the application to routing. Each section can be read independently.

\section{Preliminaries}
Let $G=(V,E)$ be a weighted undirected graph. We assume that the minimal weight of an edge is $1$. For any $Y\subseteq V$ and $x,y\in Y$, denote by $\dist(x,y,Y)$ the shortest path distance in $G[Y]$ (the graph induced on $Y$).
For $v\in Y$ and $r\ge 0$ let
$B(v,r,Y)=\{u\in Y\mid \dist(v,u,Y)\le r\}$, when $Y=V$ we simply write $B(v,r)$. We may sometimes abuse notation and not distinguish between a set of vertices and the graph induced by them.

An ultrametric $\left(Z,d\right)$ is a metric space satisfying a
strong form of the triangle inequality, that is, for all $x,y,z\in Z$,
$d(x,z)\le\max\left\{ d(x,y),d(y,z)\right\} $. The following definition
is known to be an equivalent one (see \cite{BLMN05}).
\begin{definition}\label{def:ultra}
	An ultrametric is a metric space $\left(Z,d\right)$ whose elements
	are the leaves of a rooted labeled tree $T$. Each $z\in T$ is associated
	with a label $\ell\left(z\right)\ge0$ such that if $q\in T$ is a
	descendant of $z$ then $\ell\left(q\right)\le\ell\left(z\right)$
	and $\ell\left(q\right)=0$ iff $q$ is a leaf. The distance between
	leaves $z,q\in Z$ is defined as $d_{T}(z,q)=\ell\left(\mbox{lca}\left(z,q\right)\right)$
	where $\mbox{lca}\left(z,q\right)$ is the least common ancestor of
	$z$ and $q$ in $T$.
\end{definition}

\section{Ramsey Partitions}\label{sec:Ramsey-Partitions}

Consider an undirected weighted graph $G=(V,E)$, and a parameter $k\ge 1$.
Let $D$ be the diameter of the graph and let $\logdiam = \lceil \log{(D+1)} \rceil$.
Let $\rho_i = 2^i/(4k)$.
We start by presenting a construction for a collection ${\cal S}$ of cluster partial partitions ${\cal X}_i$ satisfying the following key properties.

\begin{definition}\label{def:FPSDC}[$(G,U,k)$-Fully Padded Strong Diameter Hierarchical Partial Partition]
	Given a graph $G=(V,E)$, an index $k$ and a set of nodes $U\subseteq V$, a $(G,U,k)$-Fully Padded Strong Diameter Hierarchical Partial Partition (\FPSDHPP) is a collection ${\cal X}_i$
	of subsets of nodes $X \in {\cal X}_i$ with a center $r(X)$ for $0 \leq i \leq \logdiam$ with the following properties.

	\begin{description}
		\item{(i)}
		For every $0 \leq i \leq \logdiam$,
		the subsets in ${\cal X}_i$ are disjoint, namely, for every two different subsets $X,X' \in {\cal X}_i$, $X \cap X' = \emptyset$.
		\item{(ii)}
		For every $0 \leq i < \logdiam$ and
		every subset $X \in {\cal X}_i$, there exists a subset $X' \in {\cal X}_{i+1}$ such that $X \subseteq X'$.
		\item{(iii)}
		For every $0 \leq i \leq \logdiam$ and
		every $X\in {\cal X}_i$ and every $v\in X$, $\dist(v,r(X),X) < 2^i$.
		\item{(iv)}
		There exists a set $\hat{V} \subseteq U$ such that $|\hat{V}| \geq |U|^{1-1/k}$ and
		for every $v \in \hat{V}$ and every $i$, there exists a subset $X\in {\cal X}_i$
		such that $B(v,\rho_i) \subseteq X$.
	\end{description}
	
\end{definition}

For a node $v$ and index $i$, we say that $v$ is $i$-padded in ${\cal S}$, if there exists a subset $X \in {\cal X}_i$ such that
$B(v,\rho_i) \subseteq X$. We would like to maximize the number of nodes that are padded on all levels.

\paragraph{Fully Padded Strong Diameter Hierarchical Partial Partition Construction:}

Let us now turn to the construction of the collection ${\cal S}$ of cluster partial partitions ${\cal X}_i$ given a set $U$.

In the beginning of the algorithm, all nodes in $U$ are set as \emph{marked}. The algorithm iteratively \emph{unmarks} some of the nodes.
The nodes that will remain marked by the end of the process are the nodes that are padded on all levels.
For a given graph $H$, let $B_M(v,d,H)$ ($M$ stands for marked) be the set of marked nodes at distance at most $d$ from $v$ in $H$.

For a subgraph $G'$ and a node $v \in V(G')$,
let $Z_i(v,G') = |B_M(v,2^i,G')|/|B_M(v,2^{i-1},G')|$.
The construction given in \algref{alg-strong-ramsey}.

\begin{algorithm}[h]
	\caption{$
		\mathcal{S}=\FPSDHPP(G,U)$}
	\begin{algorithmic}[1]\label{alg-strong-ramsey}
		\STATE Mark all the nodes in $U$.
		\STATE  Set ${\cal X}_{\logdiam}=\{V\}$ to be the trivial partition. Set $r(V)\in V$ to be the vertex $v$ with maximal $|B_M(v,2^{\phi-1},G)|$.
		\FOR {$i$ from ${\logdiam-1}$ to $0$}
		\FOR {every subset $X \in {\cal X}_{i+1}$}\label{line:pick_Subset}
		\STATE Set $H$ to be the induced graph on $X$.
		\WHILE {$H$ contains a marked vertex}
		\STATE Pick a node $v \in V(H)$ with maximal $|B_M(v,2^{i-1},H)|$.
		\STATE Let $j(v)\ge 0$ be the minimal integer such that\\ \hspace{40pt}$|B_M(v,2^{i-1} + 2(j(v)+1) \rho_i, H)| \leq
		|B_M(v,2^{i-1} + 2j(v) \rho_i, H)| \cdot |Z_i(v,H)|^{1/k}$.
		\label{line:RG}
		\STATE Let $X(v)=B\left(v,2^{i-1}+(2j(v)+1)\rho_{i},H\right)$.
		\label{line:X(v)def}
		\STATE Add $X(v)$ to  ${\cal X}_i$.
		\STATE Unmark the nodes in $B(v,2^{i-1} + 2(j(v)+1) \rho_i, H) \setminus B(v,2^{i-1} + 2j(v) \rho_i, H)$.
		\label{line:unMark}
		\STATE Remove all nodes in $X(v)$ from $H$. \label{line:RemoveX(v)}
		\ENDWHILE
		\ENDFOR	
		\ENDFOR
		\STATE \textbf{set} $\hat{V}$ to be all nodes that remain marked.
	\end{algorithmic}
\end{algorithm}

Let $X(v)$ be a set constructed in \lineref{line:X(v)def} of \algref{alg-strong-ramsey}, when partitioning $X\in{\cal X}_{i+1}$.
We say that $X$ is the parent of $X(v)$.
Let $H(X(v))$ denote the graph in \algref{alg-strong-ramsey} just before $X(v)$ was constructed (note that this is a graph induced on a subset of $X$).
We say that $B(v,2^{i-1} + 2j(v) \rho_i, H(X(v)))$ is the {\em interior} part of $X(v)$
.
We also say that the set $B_M(v,2^{i-1} + 2(j(v)+1) \rho_i, H(X(v)))$ is the {\em responsibility set} of $X(v)$, hereafter referred to as $\res(X(v))$.
Note that every node $u$ that is still marked after the processing of $X$ is completed, belongs to exactly one set $\res(X(v))$ for $X(v) \in {\cal X}_i$.

We now define by induction the term {\em $i$-surviving} for $0\leq i \leq \logdiam$:
All nodes in $U$ are ${\logdiam}$-surviving.
We say that a node is $i$-surviving if it is $(i+1)$-surviving and it belongs to the interior part of some subset in ${\cal X}_{i}$.
Our goal in the analysis is to show that many nodes are $0$-surviving, which is exactly the set ${\hat{V}}$.
For a subset $X \in {\cal X}_{i}$, let $\sur(X)$ be the set of nodes in $X$ that are $0$-surviving.
We now turn to the analysis.

The next auxiliary claim helps in showing that property $(iii)$ holds.
\begin{claim}
	\label{claim:j-v}
	Consider a subset $X \in {\cal X}_i$ centered at some node $v = r(X)$.
	The index $j(v)$ defined in \lineref{line:RG} of \algref{alg-strong-ramsey}, satisfies $j(v) \leq k-1$
\end{claim}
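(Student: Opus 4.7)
The plan is a standard pigeonhole/telescoping argument, exploiting the fact that $\rho_i = 2^i/(4k)$ was precisely chosen so that $2k\rho_i = 2^{i-1}$.

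Suppose toward contradiction that $j(v) \ge k$. By the minimality in \lineref{line:RG}, for every $j' \in \{0, 1, \dots, k-1\}$ the defining inequality \emph{fails}, i.e.
\[
|B_M(v,\, 2^{i-1} + 2(j'+1)\rho_i,\, H)| \;>\; |B_M(v,\, 2^{i-1} + 2j'\rho_i,\, H)| \cdot |Z_i(v,H)|^{1/k}.
\]
The next step is to take the product of these $k$ strict inequalities over $j' = 0, 1, \dots, k-1$. The left-hand sides and right-hand sides telescope (each $|B_M(v, 2^{i-1} + 2j' \rho_i, H)|$ for $1 \le j' \le k-1$ appears on both sides and cancels), leaving
\[
|B_M(v,\, 2^{i-1} + 2k\rho_i,\, H)| \;>\; |B_M(v,\, 2^{i-1},\, H)| \cdot |Z_i(v,H)|.
\]

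Now comes the key calibration: since $\rho_i = 2^i/(4k)$, we have $2k\rho_i = 2^{i-1}$, so $2^{i-1} + 2k\rho_i = 2^i$. Substituting, and using the definition $Z_i(v,H) = |B_M(v,2^i,H)|/|B_M(v,2^{i-1},H)|$, the inequality reads
\[
|B_M(v, 2^i, H)| \;>\; |B_M(v, 2^{i-1}, H)| \cdot \frac{|B_M(v, 2^i, H)|}{|B_M(v, 2^{i-1}, H)|} \;=\; |B_M(v, 2^i, H)|,
\]
which is absurd. Therefore $j(v) \le k-1$.

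There is no real obstacle here; the only subtlety is being careful with the strict-vs-non-strict inequality when taking products (strictness is preserved because all the cardinalities $|B_M(\cdot)|$ are positive — $v$ itself is a marked node within any ball of positive radius around it at the moment of processing), and remembering that the growth ratio $|Z_i(v,H)|^{1/k}$ was designed precisely so that $k$ such factors reassemble into exactly $|Z_i(v,H)|$.
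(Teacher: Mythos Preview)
Your proof is correct and is essentially identical to the paper's: both assume the inequality fails for all $j'=0,\dots,k-1$, telescope the resulting strict inequalities, and use $2k\rho_i=2^{i-1}$ together with the definition of $Z_i$ to reach the contradiction $|B_M(v,2^i,H)|>|B_M(v,2^i,H)|$. One minor inaccuracy in your parenthetical remark: $v$ is chosen to maximize $|B_M(v,2^{i-1},H)|$ but need not itself be marked; positivity of $|B_M(v,2^{i-1},H)|$ instead follows because the while-loop guarantees $H$ contains some marked vertex, and $v$'s ball has at least as many marked vertices as that vertex's ball.
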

\begin{proof}
	Seeking contradiction, assume that for every $0\le j'\le k-1$, $|B_M(v,2^{i-1} + 2(j'+1) \rho_i, H(X))| > |B_M(v,2^{i-1} + 2j' \rho_i, H(X))| \cdot |Z_i(v,H(X))|^{1/k}$. Then applying this for $j'=k-1,k-2,\dots,0$ we get
	\begin{align*}
	|B_{M}(v,2^{i},H(X))| & =|B_{M}(v,2^{i-1}+2k\rho_{i},H(X))|\\
	& >|B_{M}(v,2^{i-1}+2(k-1)\rho_{i},H(X))|\cdot|Z_{i}(v,H(X))|^{1/k}\\
	&  > \dots>|B_{M}(v,2^{i-1},H(X))|\cdot|Z_{i}(v,H(X))|^{k/k}\\
	& =|B_{M}(v,2^{i},H(X))|~,
	\end{align*}
	a contradiction.
\end{proof}

The next lemma shows that the collection ${\cal S}$ satisfies properties $(i)-(iii)$.

\begin{lemma}
	\label{lem:tree-cover-3}
	${\cal S}$ satisfy properties $(i)-(iii)$.
\end{lemma}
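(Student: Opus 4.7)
The plan is to verify each of the three properties separately. Properties (i) and (ii) will follow directly from bookkeeping in \algref{alg-strong-ramsey}, and property (iii) will come from \claimref{claim:j-v} together with a short arithmetic check.

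I would start with property (ii), since it is immediate and is also used in the proof of (i). When the algorithm processes a parent $X \in {\cal X}_{i+1}$, the auxiliary graph $H$ is initialized to $G[X]$ and its vertex set only shrinks during the inner while loop. Every child $X(v)$ produced on \lineref{line:X(v)def} is therefore a subset of $X$, which is exactly property (ii).

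For property (i) I would proceed by downward induction on $i$. The base level $\logdiam$ has ${\cal X}_{\logdiam}=\{V\}$, so there is nothing to check. For the inductive step, fix two distinct sets $Y, Y' \in {\cal X}_i$ with parents $X, X' \in {\cal X}_{i+1}$ respectively. If $X \ne X'$ then $X \cap X' = \emptyset$ by the inductive hypothesis and $Y \subseteq X$, $Y' \subseteq X'$ by (ii), so $Y \cap Y' = \emptyset$. If $X = X'$ then, without loss of generality, $Y$ is produced earlier in the inner while loop, and every vertex of $Y$ is removed from $H$ on \lineref{line:RemoveX(v)} before $Y'$ is constructed, so $Y$ and $Y'$ are again disjoint.

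The main substance is property (iii). Applying \claimref{claim:j-v} gives $j(v) \le k-1$, so the defining radius of $X(v)$ on \lineref{line:X(v)def} is at most $2^{i-1} + (2k-1)\rho_i$. Plugging in $\rho_i = 2^i/(4k)$, a quick calculation gives $2^{i-1}(4k-1)/(2k) < 2^i$. Since $X(v)$ is defined as a ball around its center $v = r(X(v))$ in $H(X(v))$, the standard fact that shortest paths from the center of a ball stay inside the ball implies that for every $u \in X(v)$ the distance $\dist(v,u,X(v))$ in the graph induced on $X(v)$ agrees with $\dist(v,u,H(X(v)))$, and is hence bounded by the defining radius. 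This last point is the only slightly delicate step, but it is a standard observation; together with the arithmetic bound it yields $\dist(v,u,X(v)) < 2^i$, as required.
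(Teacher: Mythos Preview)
Your proof is correct and follows essentially the same approach as the paper, just with more detail. The paper dispatches (i) in one line by pointing to \lineref{line:RemoveX(v)}, handles (ii) identically to you, and for (iii) bounds the radius by $2^{i-1}+(2(k-1)+1)\rho_i<2^{i-1}+2k\rho_i=2^i$ via \claimref{claim:j-v}; your explicit induction for (i) and your remark that shortest paths from the center of a ball remain inside the ball (so that the induced distance in $X(v)$ equals the distance in $H(X(v))$) are points the paper leaves implicit but which you have made explicit.
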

\begin{proof}
	Property $(i)$ is straightforward from \lineref{line:RemoveX(v)}.
	Property $(ii)$ holds as each $X(v)$ is selected from the graph $H(X(v))$, which is an induced graph over a subset of $X$ (the parent of $X(v)$). Finally, property $(iii)$ follows from
	\claimref{claim:j-v}, as the radius of $X(v)$ is bounded by $2^{i-1}+(2(k-1)+1)\rho_{i}<2^{i-1}+2k\rho_{i}=2^{i}$.
\end{proof}

Next we argue that if a vertex is $0$-surviving, then it is padded in all the levels.
\begin{lemma}\label{lem:SurPaddedAll}
	Suppose $x\in \sur(V)$, then $x$ is padded in all the levels.
\end{lemma}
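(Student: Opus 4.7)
The plan is to proceed by downward induction on $i$, establishing that for every $x \in \sur(V)$ and every $0 \leq i \leq \logdiam$ there is some $X \in {\cal X}_i$ with $B(x,\rho_i) \subseteq X$. The base case $i = \logdiam$ is immediate, since ${\cal X}_\logdiam = \{V\}$ and the single cluster trivially contains $B(x,\rho_\logdiam)$. For the inductive step, suppose $B(x,\rho_{i+1}) \subseteq X'$ for some $X' \in {\cal X}_{i+1}$. Because $x$ is $i$-surviving, some $X(v) \in {\cal X}_i$ has $x$ in its interior, and by property $(ii)$ together with the disjointness of ${\cal X}_{i+1}$ the parent of $X(v)$ must coincide with $X'$; thus $X(v) \subseteq X'$. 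My goal is to show $B(x,\rho_i) \subseteq X(v)$.

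Fix $u \in B(x,\rho_i)$ and let $P$ be a shortest $x$--$u$ path in $G$. Every vertex on $P$ is within $\rho_i \leq \rho_{i+1}$ of $x$ in $G$, so the inductive hypothesis forces $P \subseteq X'$. The key step is to upgrade this to $P \subseteq H(X(v))$---the state of the subgraph just before $X(v)$ is carved out by \algref{alg-strong-ramsey}. Once this is in hand, distances in $G$ and in $H(X(v))$ coincide for $x$ and $u$, and the triangle inequality inside $H(X(v))$ gives
\[
\dist(v,u,H(X(v))) \;\leq\; \dist(v,x,H(X(v))) + |P| \;\leq\; \bigl(2^{i-1} + 2j(v)\rho_i\bigr) + \rho_i \;=\; 2^{i-1} + (2j(v)+1)\rho_i,
\]
so $u \in X(v)$ by the definition on \lineref{line:X(v)def}.

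To justify $P \subseteq H(X(v))$, suppose for contradiction that $P$ enters a sibling cluster $X(v_{s'}) \in {\cal X}_i$ that was carved inside $X'$ strictly before $X(v)$, and let $w$ be the first vertex of $P$ lying in such a cluster. The prefix of $P$ from $x$ to $w$ lies in $H(X(v)) \cup \{w\} \subseteq H(X(v_{s'}))$---since $H(X(v)) \subseteq H(X(v_{s'}))$ (later iterations see a smaller $H$) and $w \in X(v_{s'}) \subseteq H(X(v_{s'}))$. Using $\dist(v_{s'},w,H(X(v_{s'}))) \leq 2^{i-1} + (2j(v_{s'})+1)\rho_i$ (from $w \in X(v_{s'})$) and $\dist(w,x,H(X(v_{s'}))) \leq \rho_i$, the triangle inequality yields $\dist(v_{s'},x,H(X(v_{s'}))) \leq 2^{i-1} + 2(j(v_{s'})+1)\rho_i$; i.e., $x \in \res(X(v_{s'}))$. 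Either $x$ sits at distance at most $2^{i-1} + (2j(v_{s'})+1)\rho_i$ from $v_{s'}$ and therefore belongs to $X(v_{s'})$, contradicting $x \in X(v)$ by property $(i)$; or $x$ sits in the outer portion of the responsibility set and is consequently unmarked by \lineref{line:unMark} during the processing of $X(v_{s'})$.

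I expect the subtle part to be the last subcase. The natural remedy is to strengthen the inductive statement so that $x \in \sur(V)$ guarantees not only interior-membership at every level but also that $x$ is never unmarked throughout the execution---equivalently, to prove in parallel that $\sur(V) \subseteq \hat{V}$. With this strengthening, the unmarking conclusion contradicts $x \in \sur(V)$ outright, closing the induction.
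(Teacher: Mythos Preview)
Your argument is the same as the paper's: downward induction on $i$, a first-intersection contradiction to show $B(x,\rho_i)$ survives into $H(X(v))$, and then the triangle inequality together with the interior bound $\dist(v,x,H(X(v)))\le 2^{i-1}+2j(v)\rho_i$ to conclude $B(x,\rho_i)\subseteq X(v)$. Your path-based phrasing (take the first vertex of $P$ lying in an earlier sibling) is a harmless variant of the paper's set-based phrasing (take the first sibling, in creation order, meeting $B$).

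On the subtlety you flag: the paper does not attempt to fold ``$x$ is never unmarked'' into the downward induction. It simply relies on the assertion, stated just before the lemma, that the $0$-surviving vertices are exactly $\hat V$. The direction you need, $\sur(V)\subseteq\hat V$, follows from the structure at level $0$: each cluster in $\mathcal X_0$ has radius $<1$ and is therefore a singleton $\{v\}$, and since the while-loop picks as center a vertex maximizing $|B_M(\cdot,2^{-1},H)|$ while a marked vertex remains, that center $v$ is necessarily marked. Hence an unmarked vertex belongs to no cluster of $\mathcal X_0$ at all, so it cannot be $0$-surviving. With this global fact in hand, ``$x$ was unmarked'' is an immediate contradiction to $x\in\sur(V)$, exactly as the paper writes.

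Your proposed remedy of strengthening the downward induction to carry ``$x$ remains marked'' does not close on its own: the hypothesis at level $i+1$ only tells you $x$ is marked at the \emph{start} of level-$i$ processing, which does not rule out $x$ landing in the outer ring of some earlier sibling $X(v_{s'})$ while still ending up in the interior of a later $X(v)$. The level-$0$ observation is what actually does the work, and it is best stated as a separate one-line fact rather than woven into the induction.
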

\begin{proof}
	Fix some $x\in \sur(V)$. To prove that $x$ is $i$-padded, we assume inductively that $x$ is $j$-padded for all $i< j\le\logdiam$ (the base case $i=\logdiam$ follows as $B(x,\rho_\logdiam)\subseteq V$).
	Let $X\in\mathcal{X}_{i+1}$ such that $x\in X$. Set $B=B(x,\rho_{i})$. By the induction hypothesis $B\subseteq B(x,\rho_{i+1})\subseteq X$.
	Let $X(v)\in \mathcal{X}_{i}$ such that $x\in X(v)$.
	
First we argue that $B\subseteq H(X(v))$. Seeking contradiction, let $X(v')\in \mathcal{X}_{i}$ be the first created cluster such that there is $u\in B\cap X(v')$.
	By the minimality of $v'$, it follows that $B\subseteq H(X(v'))$. Thus $\dist(v,u,H(X(v')))=\dist(v,u,G)\le \rho_{i}$.
	Let $j(v')$ the index chosen in \lineref{line:RG} of \algref{alg-strong-ramsey}. Then $X(v')=B\left(v',2^{i-1}+(2j(v')+1)\rho_{i},H(X(v'))\right)$. Using the triangle inequality $\dist(x,v',H(X(v')))\le  2^{i-1}+(2j(v')+2)\rho_{i}$. Therefore $x$ was unmarked in \lineref{line:unMark}, a contradiction.
	
	It remains to show that $B\subseteq X(v)$. Set $j(v)$ s.t. $X(v)=B\left(v,2^{i-1}+(2j(v)+1)\rho_{i},H(X(v))\right)$. As $x$ is part of the interior of $X(v)$, it holds that
	$\dist(x,v,H(X(v)))\le 2^{i-1}+2j(v)\rho_{i}$. Therefore $B\subseteq B\left(v,2^{i-1}+(2j(v)+1)\rho_{i},H(X(v))\right)=X(v)$.
\end{proof}

The next lemma bounds the number of surviving nodes.
\begin{lemma}\label{lem:frac-survival}
	For every index $i$ and every subset $X = X(v) \in {\cal X}_i$, the 0-surviving nodes satisfy
	$|\sur(X)|\ge\left|M_i(X)\right|/\left|B_M(r(X),2^{i-1},H(X))\right|^{\frac{1}{k}}$
\end{lemma}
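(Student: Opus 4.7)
The plan is to prove the lemma by induction on the level $i$, from $i=0$ up to $i=\phi$. For the base case $i=0$, \claimref{claim:j-v} gives $j(v)\le k-1$, so the radius of $X(v)$ is at most $2^{-1}+(2j(v)+1)\rho_0<2^{-1}+2k\rho_0=1$. Since every edge weight is at least $1$, we get $X(v)=\{v\}$, and the inequality reduces to the trivial $1\ge 1/1$.

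For the inductive step, consider $X=X(v)\in{\cal X}_i$ with $i\ge 1$. Processing $X$ at level $i-1$ in \algref{alg-strong-ramsey} creates children $Y(u_1),Y(u_2),\dots\in{\cal X}_{i-1}$, each contained in $X$. A node of $X$ is $0$-surviving only if at every level $\ell<i$ it lies in the interior of some cluster in ${\cal X}_\ell$, and in particular it lies in a unique child $Y(u_j)$. Hence $|\sur(X)|=\sum_j|\sur(Y(u_j))|$, and applying the inductive hypothesis to each summand gives
\[
|\sur(X)|\;\ge\;\sum_j \frac{|M_{i-1}(Y(u_j))|}{|B_M(u_j,2^{i-2},H(Y(u_j)))|^{1/k}}.
\]

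The crucial technical step is to bound each denominator uniformly by $|B_M(r(X),2^{i-1},H(X))|$. I would chain two inclusions. First, $H(Y(u_j))$ is an induced subgraph of $H(X)$ whose current marked set is contained in the marked set at the creation time of $X$, and distances only grow when passing to a subgraph; together these yield $B_M(u_j,2^{i-2},H(Y(u_j)))\subseteq B_M(u_j,2^{i-1},H(X))$. Second, by the selection rule defining $r(X)=v$ (\lineref{line:pick_Subset}), the vertex $v$ maximizes $|B_M(\cdot,2^{i-1},H(X))|$ over $V(H(X))$, so $|B_M(u_j,2^{i-1},H(X))|\le |B_M(v,2^{i-1},H(X))|$ for every $u_j$.

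Pulling this common denominator out of the sum, it remains to lower-bound $\sum_j |M_{i-1}(Y(u_j))|$ by $|M_i(X)|$. This should follow from the observation that the marked vertices in $X$ that enter level-$(i-1)$ processing are partitioned across the children via their responsibility sets $\res(Y(u_j))$, together with the defining inequality of $j(v)$ in \lineref{line:RG}, which calibrates the ratio between the level-$i$ and level-$(i-1)$ marked counts by exactly the factor $Z_i(v,H(X))^{1/k}$ that sits between $|B_M(v,2^{i-1},H(X))|$ and $|B_M(v,2^i,H(X))|$. I expect the main obstacle to be precisely this bookkeeping step: keeping straight what ``marked'' means at each timestamp, distinguishing the interior part of $X$ from its responsibility set, and verifying that the increments $\rho_i$ and the threshold in \lineref{line:RG} are tuned so that the constants telescope exactly as claimed in the lemma.
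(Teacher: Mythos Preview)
Your induction scheme and your bound on the denominators are fine, but the final step---``it remains to lower-bound $\sum_j |M_{i-1}(Y(u_j))|$ by $|M_i(X)|$''---is where the argument breaks. That inequality goes the wrong way: $M_{i-1}(Y(u_j))$ is the set of \emph{interior} marked nodes of $Y(u_j)$ (those that survive the unmarking in \lineref{line:unMark}), so $\sum_j |M_{i-1}(Y(u_j))|\le |M_i(X)|$, with strict inequality whenever any node is unmarked at this level. The responsibility sets satisfy $\sum_j |\res(Y(u_j))|=|M_i(X)|$, but each $|M_{i-1}(Y(u_j))|$ is only $\ge |\res(Y(u_j))|/Z_{i-1}(u_j,H(Y(u_j)))^{1/k}$, and the child-dependent factors $Z_{i-1}(u_j,\cdot)$ cannot be pulled out of the sum or replaced by a single $Z_i(v,H(X))$ as you suggest (that parent-level factor governs the creation of $X$, not of its children).

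The reason this happens is that your denominator bound is too generous: in passing from $|B_M(u_j,2^{i-2},H(Y(u_j)))|$ directly to $|B_M(v,2^{i-1},H(X))|$ you jump the radius from $2^{i-2}$ to $2^{i-1}$ for free, discarding exactly the factor $Z_{i-1}(u_j,H(Y(u_j)))$ that you later need in the numerator. The paper's proof applies the region-growing inequality \emph{before} uniformizing the denominator: combine the inductive hypothesis with $|M_{i-1}(Y(u_j))|\ge |\res(Y(u_j))|/Z_{i-1}(u_j,H(Y(u_j)))^{1/k}$ to obtain
\[
|\sur(Y(u_j))|\ \ge\ \frac{|\res(Y(u_j))|}{|B_M(u_j,2^{i-1},H(Y(u_j)))|^{1/k}},
\]
and only then bound each $|B_M(u_j,2^{i-1},H(Y(u_j)))|\le |B_M(v,2^{i-1},H(X))|$ via the inclusion $H(Y(u_j))\subseteq H(X)$ and the maximality of $v$. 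Now the numerators are the $|\res(Y(u_j))|$, which do sum to $|M_i(X)|$, and the induction closes. So the ingredients you identified are correct; it is the order in which you combine them that needs to change.
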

\begin{proof}
	We prove the lemma by induction on $i$.
	Consider first the base case where $X = X(v) \in {\cal X}_0$.
	As the subsets in ${\cal X}_0$ contain a single node (their radius is less than 1), it holds that	
	$|\sur(X)|=1= \frac{1}{1}=\left|M_0(X)\right|/\left|B_M(r(X),1/2,H(X))\right|^{\frac{1}{k}}$ (observe that each cluster in ${\cal X}_i$ has at least 1 marked node, for all $0\le i\le\logdiam$).
	Assume the claim holds for every subset $X' \in {\cal X}_i$, and consider $X \in {\cal X}_{i+1}$.
	Let $v =r(X)$.
	Consider the children $X_1,...,X_{j'}$ of $X$.
	For every $1\leq h \leq j'$, set $v_h=r(X_h)$.
	
	Note that by definition of $j(v_h)$ in \lineref{line:RG} of \algref{alg-strong-ramsey} and by the construction of $X_h$ in \lineref{line:X(v)def}, we have
	that $|M_i(X_h)|\ge |\res(X_h)|/Z_{i}(v_h,H(X_h))^{1/k}$.
	Moreover, by the induction hypothesis we have that
	$|\sur(X_h)|\ge\left|M_i(X_h)\right|/\left|B_M(v_h,2^{i-1},H(X_h))\right|^{\frac{1}{k}}$ for every $h$.

	We claim that $|B_M(v,2^{i},H(X))|  \geq |B_M(v_h,2^{i},H(X_h))|$ for every $1\leq h \leq j'$.
	To see this, note that $v$ is the node with maximal $|B_M(v,2^{i},H(X))|$, hence $|B_M(v,2^{i},H(X))| \geq |B_M(v_h,2^{i},H(X))|$.
	In addition, note that $H(X_h) \subseteq H(X)$, hence
	$
	|B_M(v_h,2^{i},H(X))| \geq  |B_M(v_h,2^{i},H(X_h))|$.
	It follows that $|B_M(v,2^{i},H(X))|  \geq |B_M(v_h,2^{i},H(X_h))|$.
	
	Therefore the number of 0-surviving nodes in $X_h$ is at least
	
	\begin{eqnarray*}
		|\sur(X_h)|
		&\ge&\frac{\left|M_i(X_h)\right|}{\left|B_M(v_h,2^{i-1},H(X_h))\right|^{\frac{1}{k}}}
		\ge
		 \frac{\left|\res(X_h)\right|/\left|Z_{i}(v_h,H(X_h))\right|^{\frac{1}{k}}}{\left|B_M(v_h,2^{i-1},H(X_h))\right|^{\frac{1}{k}}}\\
		&=& 	 \frac{	\left|\res(X_h)\right|}{\left|B_M(v_h,2^{i},H(X_h))\right|^{\frac{1}{k}}}
		\cdot \frac{\left|B_M(v_h,2^{i-1},H(X_h))\right|^{\frac{1}{k}}}{\left|B_M(v_h,2^{i-1},H(X_h))\right|^{\frac{1}{k}}}\\
		&\ge& 	 \frac{	\left|\res(X_h)\right|}{\left|B_M(v,2^{i},H(X))\right|^{\frac{1}{k}}}~,\\
	\end{eqnarray*}
	we conclude that
	\[
	|\sur(X)|=\sum_{h=1}^{j'}|\sur(X_h)|\ge \sum_{h=1}^{j'}\frac{	 \left|\res(X_h)\right|}{\left|B_M(v,2^{i},H(X))\right|^{\frac{1}{k}}}=\frac{	 \left|M_{i+1}(X)\right|}{\left|B_M(v,2^{i},H(X))\right|^{\frac{1}{k}}}~.\qedhere
	\]
\end{proof}

Using \lemmaref{lem:frac-survival} on $V$ with $i=\logdiam$, combined with \lemmaref{lem:SurPaddedAll}, implies property $(iv)$.

\begin{lemma}
	The number of marked nodes  $\hat{V}$ by the end of \algref{alg-strong-ramsey} is at least $|U|^{1-1/k}$.
	Moreover, for every $v \in \hat{V}$ and every $i$, there exists a subset $X\in {\cal X}_i$
	such that $B(v,\rho_i) \subseteq X$.
\end{lemma}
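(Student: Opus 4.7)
The plan is to apply \lemmaref{lem:frac-survival} at the top level of the hierarchy, to the unique cluster $V \in {\cal X}_\logdiam$, and then use \lemmaref{lem:SurPaddedAll} to get the padding conclusion for free.

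Concretely, I would first observe that $\hat{V}$ (the set of nodes that remain marked at the end of the algorithm) coincides with $\sur(V)$, the set of $0$-surviving nodes inside the top-level cluster $V$. This is immediate from the inductive definition of ``$i$-surviving'': a node ends up $0$-surviving exactly when it lies in the interior of some chosen cluster at every level from $\logdiam$ down to $0$, and the unmarking step (\lineref{line:unMark}) only unmarks vertices that fall outside the interior of the cluster being created. Thus bounding $|\hat{V}|$ reduces to bounding $|\sur(V)|$ from below.

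Next, I would instantiate \lemmaref{lem:frac-survival} with $i=\logdiam$ and $X=V$, which gives
\[
|\sur(V)| \;\ge\; \frac{|M_{\logdiam}(V)|}{|B_M(r(V),\,2^{\logdiam-1},\,V)|^{1/k}}.
\]
The numerator equals $|U|$ because no unmarking occurs at level $\logdiam$ (the algorithm's main loop starts at level $\logdiam-1$), so every node initially marked in $U$ is still marked when we view $V$ as a cluster of ${\cal X}_\logdiam$. For the denominator, since marked vertices are always a subset of $U$, we have the trivial bound $|B_M(r(V),2^{\logdiam-1},V)| \le |U|$. Plugging both estimates in yields $|\sur(V)| \ge |U|/|U|^{1/k} = |U|^{1-1/k}$, which is precisely the first assertion of the lemma.

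Finally, the ``moreover'' part follows from \lemmaref{lem:SurPaddedAll}: every $v \in \sur(V) = \hat{V}$ is $i$-padded for every $0 \le i \le \logdiam$, meaning there exists $X \in {\cal X}_i$ with $B(v,\rho_i) \subseteq X$. I do not expect any real obstacle here, since all the work is already packaged into the two preceding lemmas; the only subtle point is justifying that $|M_{\logdiam}(V)| = |U|$, i.e.\ verifying that the notation $M_i(X)$ (the marked vertices surviving level~$i$ inside $X$) reduces at the top level to the full initial marked set $U$ because no level-$\logdiam$ processing takes place.
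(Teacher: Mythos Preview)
Your proposal is correct and follows exactly the approach the paper takes: the paper's entire argument for this lemma is the single sentence ``Using \lemmaref{lem:frac-survival} on $V$ with $i=\logdiam$, combined with \lemmaref{lem:SurPaddedAll}, implies property $(iv)$,'' and you have simply unpacked that sentence, supplying the identification $\hat{V}=\sur(V)$ (stated explicitly in the paper just before \lemmaref{lem:frac-survival}) and the trivial bounds $|M_{\logdiam}(V)|=|U|$ and $|B_M(r(V),2^{\logdiam-1},V)|\le |U|$ needed to read off $|\hat{V}|\ge |U|^{1-1/k}$.
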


\begin{theorem}\label{thm:DRP}
	For every $n$-point metric space and $k\ge 1$, there exists a subset of size $n^{1-1/k}$ that can be embedded into an ultrametric with distortion $8k-2$.
\end{theorem}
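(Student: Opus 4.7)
The plan is to invoke \algref{alg-strong-ramsey} on the input metric, viewed as a complete weighted graph $G=(V,E)$ on $n$ vertices (rescaled so the minimum distance equals $1$), with $U:=V$. By \lemmaref{lem:tree-cover-3} and the lemma immediately following \lemmaref{lem:frac-survival}, this produces a laminar hierarchy $\{\mathcal{X}_i\}_{i=0}^{\logdiam}$ satisfying properties (i)--(iv), together with a subset $\hat V\subseteq V$ of size at least $n^{1-1/k}$ that is padded at every level. I would then convert this hierarchy to an ultrametric on $\hat V$ in the standard HST manner (\defref{def:ultra}): each cluster $X\in\mathcal{X}_i$ becomes an internal node labeled
\[
 \ell_i\eq 2^i\cdot\frac{4k-1}{2k},
\]
parent-child edges are induced by containment (property (ii)), and the leaves are the singleton clusters of $\mathcal{X}_0$ (since $\rho_0=\frac{1}{4k}<1$ and the minimum edge weight is $1$ force every level-$0$ cluster to be a singleton), which are precisely the members of $\hat V$. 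A short calculation using \claimref{claim:j-v} shows that every cluster of $\mathcal{X}_i$ has radius at most $2^{i-1}+(2(k-1)+1)\rho_i = 2^{i-1}(4k-1)/(2k)$, hence strong-diameter at most $\ell_i$, so labels dominate diameters and are strictly decreasing along every root-to-leaf path.

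I would then fix $v,u\in\hat V$, let $i^*$ be the smallest index with $v$ and $u$ lying in a common cluster $X\in\mathcal{X}_{i^*}$, and derive the distortion bound. Non-contraction is direct: $d(v,u)\le\diam(X)\le \ell_{i^*}=\rho(v,u)$. For the expansion bound, by minimality of $i^*$ the vertices $v,u$ lie in distinct clusters $X_v,X_u\in\mathcal{X}_{i^*-1}$, and property (iv) applied to both $v$ and $u$ (crucially, both are in $\hat V$) yields $B(v,\rho_{i^*-1})\subseteq X_v$ and $B(u,\rho_{i^*-1})\subseteq X_u$. By property (i) these balls are disjoint, which forces $d(v,u)>2\rho_{i^*-1}=2^{i^*}/(4k)$. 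Therefore
\[
\frac{\rho(v,u)}{d(v,u)} \;<\; \frac{2^{i^*}(4k-1)/(2k)}{2^{i^*}/(4k)} \;=\; 2(4k-1)\;=\;8k-2.
\]

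The main subtlety — and the source of the improvement over the naive $\sim 16k$ bound one would obtain by arguing about only $v$'s padded ball — is the doubling gain from exploiting that both endpoints lie in $\hat V$ and hence have disjoint padded balls at level $i^*-1$. Everything else is bookkeeping: the hierarchical structure, the label monotonicity, and the singleton leaves follow from the already-proven properties of \FPSDHPP\ combined with the tight radius estimate from \claimref{claim:j-v}, while the size bound $|\hat V|\ge n^{1-1/k}$ is delivered verbatim by the last lemma of \sectionref{sec:Ramsey-Partitions}.
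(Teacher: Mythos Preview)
Your overall architecture is right and matches the paper: build the \FPSDHPP, turn it into an HST with labels $\ell_i=2^i(4k-1)/(2k)$, and read off the distortion at the LCA level. Your non-contraction argument is fine.

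The expansion argument, however, has a genuine gap. You write that $B(v,\rho_{i^*-1})\subseteq X_v$ and $B(u,\rho_{i^*-1})\subseteq X_u$ are disjoint, ``which forces $d(v,u)>2\rho_{i^*-1}$.'' That implication is false in a general metric space (equivalently, in the complete weighted graph you set up): disjointness of $B(v,r)$ and $B(u,r)$ only yields $d(v,u)>r$, not $d(v,u)>2r$, because there need not be any vertex near the midpoint of the $v$--$u$ geodesic. A four-point example with $d(v,u)=1.5$ and the only points within distance $1$ of $v$ (resp.\ $u$) being $v$ itself and one satellite far from $u$ (resp.\ $v$) already witnesses this. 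With only $d(v,u)>\rho_{i^*-1}$, your computation degrades to $\ell_{i^*}/\rho_{i^*-1}=4(4k-1)=16k-4$, not $8k-2$. So the ``doubling gain'' you attribute to the two disjoint padded balls is illusory.

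The paper obtains the correct bound $d(v,u)\ge\rho_{i^*}$ by a different mechanism: it goes back inside the algorithm and looks at the first cluster $X'\in\mathcal{X}_{i^*-1}$ that captures a vertex on the $v$--$u$ shortest path. Using that one of $v,u$ lies within $\rho_{i^*-1}$ of that first-captured vertex, and that both $v,u$ are marked, one shows $v$ lies in the \emph{interior} of $X'$ (else it would have been unmarked), and then the triangle inequality places $u$ in $\res(X')$; since $u$ cannot be in $X'$ (that would lower the LCA level), $u$ would get unmarked --- contradicting $u\in\hat V$. This is where the factor of $2$ really comes from: the interior/responsibility annulus has width $2\rho_{i^*-1}=\rho_{i^*}$, not $\rho_{i^*-1}$. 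Property (iv) alone, used as a black box, does not suffice; you need the finer marking/unmarking structure at level $i^*-1$.
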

\begin{proof}
The hierarchical partial partition ${\cal S}=\{{\cal X}_i\}$ naturally induce an ultrametric on $\hat{V}$. The singleton sets of $\hat{V}$ are the leaves, and each $X\in{\cal X}_i$ for $0\le i<\logdiam$ will be a tree-node which is connected to its parent. Each set in ${\cal X}_i$ for $i\ge 1$ will receive the label $2^{i+1}(1-1/(4k))$, while the leaves in ${\cal X}_0$ receive the label $0$ (recall \defref{def:ultra}).

	Consider two nodes $u,v \in \hat{V}$. Assume the least common ancestor of $u,v$ is $X\in{\cal X}_i$, for some $1\le i\le\logdiam$.
	Hence $\dist(u,v,G) \le 2 \cdot (2^i -  2^i/4k)$ (they are both in the interior of $X$ - a ball with radius $\le 2^i-2\rho_i$). Since this is the label of $X$, we conclude that distances in the ultrametric are no smaller than those in $G$.
	
	Next we argue that distances increase by a factor of at most $8k-2$. Consider any $u,v$ as above, and seeking a contradiction, assume that $\dist(v,u,G)<\frac{2^{i+1}(1-1/(4k))}{8k-2} = \rho_i$. Let $P$ be the shortest path from $v$ to $u$ in $G$.	As $v$ was padded in $X$, necessarily $P\subseteq X$. Consider the first time a vertex $z\in P$ was added to a cluster $X'\in\mathcal{X}_{i-1}$, then $P\subseteq H(X')$. Let $j$ be such that $X'= B(r(X'),2^{i-2} + (2j+1)\rho_{i-1} , H(X'))$.
Since $P$ is a shortest path, at least one of $u,v$ must be within distance less than $\rho_i/2=\rho_{i-1}$ from $z$, w.l.o.g assume $\dist(v,z,H(X'))\le \rho_{i-1}$. This implies that $v\in\res(X')=B(r(X'),2^{i-2} + (2j+2)\rho_{i-1} , H(X'))$, and as $v$ is marked it must lie in the interior of $X'$, which is $B(r(X'),2^{i-2} + 2j\rho_{i-1} , H(X'))$. But then the triangle inequality yields that $u\in\res(X')\setminus X'$, which is a contradiction to the fact that $u\in\hat{V}$.
\end{proof}

\subsection{Distance Oracle}\label{sec:Distance-Oracle}
We show a distance oracle with $O(n^{1+1/k})$ size, $(8+\eps)k$ worst case stretch and $O(1/\eps)$ query time (which is $O(1)$ for any fixed epsilon).

For simplicity we start by showing a construction with $O(k \cdot n^{1+1/k})$ size, $16k$ stretch and $O(1)$ query time. We will later see how to reduce the size and stretch.
Let $D$ be the diameter of the graph.

Our distance oracle is constructed as follows.
\beginsmall{enumerate}
\item
Set $U \gets V$.
\item
Construct the collection of cluster-partial-partitions ${\cal S}(U)$ on the graph $G$ and the set $U$.
Remove from $U$ the set of nodes $\hat{V}$ that were padded in all levels in ${\cal S}(U)$.
Continue this process as long as $U \neq \emptyset$.
\item
Let ${\cal M}$ be the set of all collections ${\cal S}(U)$ that were constructed by this process.
\item
For every ${\cal S} \in {\cal M}$ construct a cluster $X({\cal S})$ as follows.
\item
Let ${\cal S} = \{{\cal X}_0,...,{\cal X}_{\logdiam}\}$.
All nodes $V$ are the leaves (recall that only nodes in $\hat{V}$ are in ${\cal X}_0$).
For every index $i$ and every set $X \in {\cal X}_i$, add an intermediate node. Connect $X$ to its parent set.
Connect every node $v \in V$ to the set $X \in {\cal X}_i$ of minimal $i$ such that $v \in X$.
This completes the construction of $X({\cal S})$.
\item
In addition, we preprocess $X({\cal S})$ so that least common ancestor (LCA) queries could be done in constant time.
In order to do that we invoke any scheme that takes a tree and preprocess it in linear time so that LCA queries can be answered in constant time (see \cite{HaTa84,BeFa00}).
\item
Finally, note that for every node $v$ there exists a collection ${\cal S} \in {\cal M}$, where $v$ is padded in all levels.
Denote this collection by $\home(v)$.
\endsmall{enumerate}

The query phase is done as follows.
Given two nodes $s$ and $t$.
Let ${\cal S} = \home(s)$ and let ${\cal S} = \{{\cal X}_i \mid 1\leq i \leq \logdiam\}$.
Find the least common ancestor of $s$ and $t$ in $X({\cal S})$ and let $i$ be its level.
Namely, let $\mu \in X({\cal S})$ be the least common ancestor of $s$ and $t$ and let $X$ be the cluster $\mu$ represents,
the index $i$ is the index such that $X \in {\cal X}_i$.
Return $2^{i+1}$ (denoted by $\hat{\dist}(s,t)$).

\begin{lemma}
	$\dist(s,t) \leq \hat{\dist}(s,t) < 16k\cdot\dist(s,t)$.
\end{lemma}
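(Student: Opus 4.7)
The plan is to read off both inequalities from the LCA level. Let $\mathcal{S} = \home(s)$ and let $X \in \mathcal{X}_i$ be the cluster represented by the LCA of $s$ and $t$ in the tree $X(\mathcal{S})$, so that $s,t \in X$ and $\hat{\dist}(s,t) = 2^{i+1}$. Note that level-$0$ clusters are singletons (their radius is below the minimum edge weight), so for any query with $s \neq t$ the LCA satisfies $i \ge 1$.

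For the lower bound $\dist(s,t) \le \hat{\dist}(s,t)$, I would apply property (iii) of \FPSDHPP\ directly: each of $s$ and $t$ lies strictly within distance $2^i$ of the center $r(X)$ inside $G[X]$. The triangle inequality in $G[X]$ then yields $\dist(s,t) \le \dist(s,t,X) < 2 \cdot 2^i = 2^{i+1}$.

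For the upper bound $\hat{\dist}(s,t) < 16k \cdot \dist(s,t)$, I would exploit that $s \in \hat{V}$ for this collection (since $\mathcal{S} = \home(s)$), so by property (iv) $s$ is padded at every level. In particular there exists $Y \in \mathcal{X}_{i-1}$ with $B(s,\rho_{i-1}) \subseteq Y$. The laminarity property (ii) places $Y$ inside some $X^* \in \mathcal{X}_i$, and since $s \in Y \cap X \cap X^*$ while level-$i$ clusters are disjoint by property (i), we must have $X^* = X$, hence $Y \subseteq X$. Since the LCA of $s,t$ is $X$ itself rather than a proper descendant, $t \notin Y$, so $t \notin B(s,\rho_{i-1})$ and $\dist(s,t) > \rho_{i-1} = 2^{i-1}/(4k)$. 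Rearranging gives $\hat{\dist}(s,t) = 2^{i+1} = 16k \cdot \rho_{i-1} < 16k \cdot \dist(s,t)$.

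The step that requires the most care is the inclusion $Y \subseteq X$: the tree $X(\mathcal{S})$ carries no explicit parent pointer from $Y$ to $X$, so the inclusion has to be recovered by combining disjointness and laminarity of the partial partitions. Everything else is essentially bookkeeping with the definitions, and the only subtle constant-chasing is checking that $2^{i+1}/\rho_{i-1}$ is exactly $16k$, which is immediate from $\rho_j = 2^j/(4k)$.
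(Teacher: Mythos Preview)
Your proof is correct and follows essentially the same approach as the paper's. Both arguments use property~(iii) to bound the diameter of the LCA cluster for the lower bound, and property~(iv) (padding of $s$ at level $i-1$) together with $t \notin X_{i-1}$ for the upper bound; you are simply more explicit than the paper about why the level-$(i-1)$ cluster $Y$ containing $s$ sits inside $X$ and why $t \notin Y$ follows from the LCA being at level $i$.
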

\begin{proof}
	Let $d = \dist(s,t,G)$ and let $j$ be the  index such that $2^{j-1}<d \leq 2^{j}$.
	Let $X_i \in {\cal X}_i \in {\cal S}=\home(s)$ be the $i$ level subset such that $s \in X_i$.
	Recall that $s$ is padded in all the subsets $X_i$ for $0\leq i \leq \logdiam$.
	
	Note that $X_i$ has diameter smaller than $2\cdot 2^i$ (follows from property $(iii)$). Therefore $t\in X_i$ implies that $\dist(s,t,G)<2\cdot2^{i}=2^{i+1}$. In particular, $t \notin X_i$ for every $i < j-1$.
	Hence the least common ancestor is at least at level $j-1$. Hence the minimal distance returned by the algorithm is  $\hat{\dist}(s,t) \geq 2^j \geq d$.
	
	It remains to show that $\hat{\dist}(s,t) \leq 16k\cdot d$.
	Let $i$ be the level of $s$ and $t$'s least common ancestor.
	Note that $t \notin X_{i-1}$.
	Also recall that $s$ is padded in $X_{i-1}$ and thus $B(s,\rho_{i-1}) \subseteq X_{i-1}$, which implies $d\geq \rho_{i-1} = 2^{i-1}/(4k) = \hat{\dist}(s,t)/(16k)$.
\end{proof}

Let $U_i$ be the set $U$ after constructing the first $i$ collections.
Note that $|U_{i+1}| \leq |U_i| - |U_i|^{1-1/k}$.
By resolving this recurrence relation one can show that the number of phases is $O(k n^{1/k})$ (see \cite[Lemma 4.2]{MN07})
.
Notice that for every ${\cal S} \in {\cal M}$, $T({\cal S})$ is of size $O(n)$.
Hence, the size of our data structure is $O(k n^{1+1/k})$.

\paragraph{Reducing the size of the data structure:}
We now show how to reduce the size of the data structure to $O(n^{1+1/k})$.
We only outline the modifications to the algorithm and the analysis and omit the full details.

Here we will use only the metric structure of the graph $G$, while ignoring the structure induced by the edges.
Specifically, in line (2) of the algorithm, instead of the graph $G$, we will use the graph $G_U$ which is the complete graph over $U$, where the weight of each edge $\{u,v\}$ is equal to $\dist(u,v,G)$.
This change allows us to remove the nodes $\hat{V}$ from $G_U$ after each iteration.

The query algorithm, given two nodes $s$ and $t$ is as follows.
Let ${\cal S}_s = \home(s)$ and ${\cal S}_t = \home(t)$, and assume w.l.o.g. that
${\cal S}_s$ was constructed before ${\cal S}_t$.
Find the least common ancestor of $s$ and $t$ in $X ({\cal S}_s)$ and let $i$ be its level.
Return $2^{i+1}$.

Following the analysis of the previous construction we can show that properties $(i)-(iv)$ are satisfied and that the stretch is bound by $16k$.
The size of the data structure is bounded by $O(n^{1+1/k})$ (see \cite{MN07}, Lemma 4.2.

\paragraph{Reducing the stretch to $8(1+\eps)k$:}
We now explain how to reduce the stretch to $8(1+\eps)k$.
Note that we lose a factor of 2 in the stretch since we look on distances in multiplies of two.
Recall that in the algorithm, for a pair of vertices $s,t$ at distance $d$, we looked on the minimal index $j$ such that $d \leq 2^{j}$.
It may happen that $d$ is only slightly larger than $2^{j-1}$.
Note that by just considering all distances $(1+\eps)^i$ rather than all distances $2^i$, we get that the number of nodes that are padded in all levels is a fraction of
$1/n^{1/(\eps k)}$ rather than $1/n^{1/k}$, which is dissatisfying.
So instead we construct $O(1/\eps)$ different copies of our data structure, one for each $1+ \ell \eps$ for $0\leq \ell < 1/\eps$.
In the copy $\ell$ of the data structure we consider distances $(1+\ell \eps)2^i$ for every $0\leq i \leq \logdiam$.
Specifically, $i$-clusters have radius bounded by $(1+\ell \eps)2^i$, while the padding parameter is $\rho_{\ell,i}=(1+\ell \eps)\rho_i$.
We denote by $\home_\ell(s)$ the collection $\mathcal{S}$, created for the $\ell$'s distance oracle, where $s$ is padded in all levels.
The distance estimation of the $\ell$'s copy (denoted by $\hat{\dist}_\ell(s,t)$),
will be $(1+(\ell+1)\eps)2^{i_\ell}$, where $i_\ell$ is the level of the least common ancestor of $s$ and $t$ in $\home_\ell(s)$.

Set $d=\dist(s,t)$. For every $\ell$, we have
\begin{equation}\label{eq:DO-UB}
d>\rho_{\ell,i_\ell-1}=\frac{(1+\ell\eps)2^{i_\ell-1}}{4k}=\frac{(1+\ell\eps)}{(1+(\ell+1)\eps)}\cdot\frac{\hat{\dist}_{\ell}(s,t)}{8k}\ge\frac{\hat{\dist}_{\ell}(s,t)}{8(1+\eps)k}~.\end{equation}
From the other hand, there exist indices $\ell',j$ such that  $(1+\ell' \eps)2^{j-1} < d\le (1+(\ell'+1) \eps)2^{j-1}$. Following the analysis above, as $t$ does not separated from $s$ at level ${i_{\ell'}}$, it holds that ${i_{\ell'}}\ge j-1$.  Therefore
\begin{equation}\label{eq:DO-LB}
\hat{\dist}_{\ell'}(s,t)=(1+(\ell+1)\eps)2^{i_{\ell'}}\ge(1+(\ell+1)\eps)2^{j-1}\ge d~.
\end{equation}
In the query phase we iterate over all $O(1/\eps)$ copies, invoke the query algorithm in each copy and return the largest distance.
By equations (\ref{eq:DO-UB}) and (\ref{eq:DO-LB}), the stretch is $8(1+\eps)k$ rather than $16k$.
The query time is $O(1/\eps)$ which is $O(1)$ for every fixed $\eps$.

\section{Ramsey Spanning Trees}\label{sec:Ramsey-spanning}

In this section we describe the construction of Ramsey spanning trees, each tree will be built using the petal decomposition framework of \cite{AN12}.
Roughly speaking, the petal decomposition is an iterative method to build a spanning tree of a given graph. In each level, the current graph is partitioned into smaller diameter pieces, called petals, and a single central piece, which are then connected by edges in a tree structure. Each of the petals is a ball in a certain metric. The main advantage of this framework is that it produces a spanning tree whose diameter is proportional to the diameter of the graph, while allowing large freedom for the choice of radii of the petals. Specifically, if the graph diameter is $\Delta$, the spanning tree diameter will be $O(\Delta)$, and each radius can be chosen in an interval of length $\approx\Delta$. For the specific choice of radii that will ensure a sufficient number of vertices are fully padded, we use a region growing technique based on ideas from \cite{S95,EEST05}.

\subsection{Preliminaries}\label{sec:prel}

For subset $S\subseteq G$ and a root vertex $r\in S$, the radius of $S$ w.r.t $r$,  $\Delta_r(S)$, is the minimal $\Delta$ such that $B(r,\Delta,S)=S$.
	(If for every $\Delta$, $B(r,\Delta,S)\neq S$, (this can happen iff $S$ is not connected) we say that $\Delta_r(S)=\infty$.)
When the root $r$ is clear from context or is not relevant, we will omit it.
\begin{definition}
	[Strong Diameter Hierarchical Partition]
	Given a graph $G=(V,E)$, a Strong Diameter Hierarchical Partition (\textbf{SDHP}) is a collection $\{\mathcal{A}_i\}_{i\in[\Phi]}$ of partitions of $V$, where each cluster in each partition has a root, such that:
	\begin{itemize}
		\item $\mathcal{A}_\Phi=\{V\}$ (i.e., the first partition is the trivial one).
        \item $\mathcal{A}_1=\{\{v\}_{v\in V}\}$ (i.e., in the last partition every cluster is a singleton).
		\item  For every $1\le i<\Phi$ and $A\in \mathcal{A}_i$, there is $A'\in \mathcal{A}_{i+1}$ such that $A\subseteq A'$ (i.e., $\mathcal{A}_i$ is a refinement of $\mathcal{A}_{i+1}$).
		Moreover, $\Delta(A)\le \Delta(A')$.
		
	\end{itemize}
\end{definition}
\begin{definition}[Padded, Fully Padded]
	Given a graph $G=(V,E)$ and a subset $A\subseteq V$, we say that a vertex $y\in A$ is {\em $\rho$-padded} by a subset $A'\subseteq A$ (w.r.t $A$) if $B(y,\Delta(A)/\rho,G)\subseteq A'$. See \figureref{fig:paddedVertex} for illustration.\\
	We say that $x\in V$ is {\em $\rho$-fully-padded} in the SDHP $\{\mathcal{A}_i\}_{i\in[\Phi]}$, if for every $2\le i\le \Phi$ and $A\in \mathcal{A}_i$ such that $x\in A$, there exists $A'\in \mathcal{A}_{i-1}$ such that $x$ is $\rho$-padded by $A'$ (w.r.t $A$).
\end{definition}
\begin{figure}
	\begin{center}
		\includegraphics[width=0.41\textwidth]{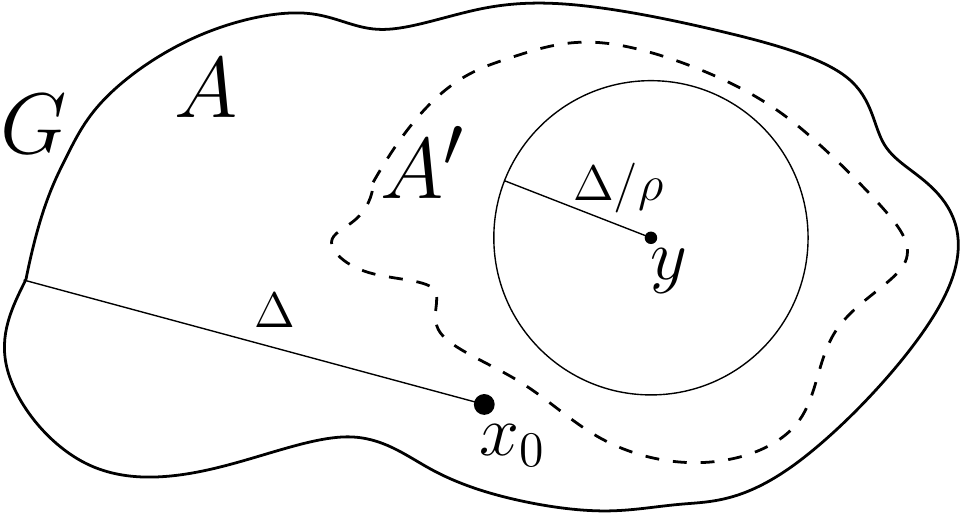}
		\caption{\small  $A$ is a subset of vertices in the graph $G$ with center $x_0$. The radius of $A$ (w.r.t $x_0$) is $\Delta$.
		$A'$ is a subset of $A$ (denoted by the dashed line). As $B(y,\Delta/\rho,G)\subseteq A'$, we say that the vertex $y$ is padded by $A'$.}
		\label{fig:paddedVertex}
	\end{center}
\end{figure}

\subsection{Petal Decomposition}\label{subsec:PetalDecompDesc}

Here we will give a concise description of the Petal decomposition algorithm, focusing on the main properties we will use.
For proofs and further details we refer to \cite{AN12}.

The \texttt{hierarchical-petal-decomposition} (see \algref{alg:h-petal}) is a recursive algorithm. The input is $G[X]$ (a graph $G=(V,E)$ induced over a set of vertices $X\subseteq V$), a center $x_0\in X$, a target $t\in X$, and the radius $\Delta=\Delta_{x_0}(X)$.\footnote{Rather than inferring $\Delta=\Delta_{x_0}(X)$ from $G[X]$ and $x_0$ as in \cite{AN12}, we can think of $\Delta$ as part of the input. We shall allow any $\Delta\ge\Delta_{x_0}(X)$. We stress that in fact in the algorithm we always use $\Delta_{x_0}(X)$, and consider this degree of freedom only in the analysis.}  The algorithm invokes \texttt{petal-decomposition} to partition $X$ into $X_0,X_1,\dots,X_s$ (for some integer $s$), and also provides a set of edges $\{(x_1,y_1),\dots,(x_s,y_s)\}$ and targets $t_0,t_1,\dots,t_s$.
The \texttt{Hierarchical-petal-decomposition} algorithm now recurses on each $(G[X_j],x_j,t_j,\Delta_{x_j}(X_j))$ for $0\le j\le s$, to get trees $\{T_j\}_{0\le j\le s}$, which are then connected by the edges $\{(x_j,y_j)\}_{1\le j\le s}$ to form a spanning tree $T$ for $G[X]$ (the recursion ends when $X_j$ is a singleton). See \figureref{fig:HPDExample} for illustration.
\begin{algorithm}[!ht]
	\caption{$T=\texttt{hierarchical-petal-decomposition}(G[X],x_{0},t,\Delta)$}\label{alg:h-petal}
	\begin{algorithmic}[1]
		\IF {$|X|=1$}
		\RETURN $G[X]$.
		\ENDIF
		\STATE Let $\left(\left\{ X_{j},x_j,t_{j},\Delta_{j}\right\} _{j=0}^{s},\left\{ (y_{j},x_{j})\right\} _{j=1}^{s}\right)=\texttt{petal-decomposition}(G[X],x_{0},t,\Delta)$;
		\FOR {each $j\in[0,\dots,s]$}
		\STATE $T_{j}=\texttt{hierarchical-petal-decomposition}(G[X_{j}],x_{j},t_{j},\Delta_j)$;
		\ENDFOR
		\STATE Let $T$ be the tree formed by connecting $T_{0},\dots,T_{s}$ using the edges $\{y_{1},x_{1}\},\dots,\{y_{s},x_{s}\}$;
	\end{algorithmic}
	
\end{algorithm}

Next we describe the \texttt{petal-decomposition} procedure, see \algref{alg:petal-d}.
Initially it sets $Y_0=X$, and for $j=1,2,\dots,s$ it carves out the petal $X_j$ from the graph induced on $Y_{j-1}$, and sets $Y_j=Y_{j-1}\setminus X_j$ (in order to control the radius increase, the first petal is cut using different parameters). The definition of petal guarantees that $\Delta_{x_0}(Y_j)$ is non-increasing (see \cite[Claim 1]{AN12}), and when at step $s$ it becomes at most $3\Delta/4$, define $X_0=Y_s$ and then the \texttt{petal-decomposition} routine ends. In carving of the petal $X_j\subseteq Y_{j-1}$, the algorithm chooses an arbitrary target $t_j\in Y_{j-1}$ (at distance at least $3\Delta/4$ from $x_0$) and a range $[lo,hi]$ of size $hi-lo\in\{\Delta/8,\Delta/4\}$ which are provided to the sub-routine \texttt{create-petal}.

\begin{algorithm}[!ht]
	\caption{$\left(\left\{ X_{j},x_j,t_{j},\Delta_{j}\right\} _{j=0}^{s},\left\{ (y_{j},x_{j})\right\} _{j=1}^{s}\right)=\texttt{petal-decomposition}(G[X],
		x_{0},t,\Delta)$}\label{alg:petal-d}
	\begin{algorithmic}[1]
		\STATE 
		Let $Y_0=X$;
		Set $j=1$;
		\IF {$d_{X}(x_{0},t)\ge\Delta/2$}
		\STATE Let
		$X_1=\texttt{create-petal}(G[Y_0],[d_{X}(x_{0},t)-\Delta/2,d_{X}(x_{0},t)-\Delta/4],x_0,t)$;
		\STATE $Y_{1}=Y_{0}\setminus X_{1}$;
		\STATE Let $\{x_1,y_{1}\}$ be the unique edge on the shortest path $P_{x_{0}t}$ from $x_0$ to $t$ in $Y_0$, where $x_1\in X_1$ and $y_1\in Y_1$.
		\STATE Set $t_{0}=y_{1}$, $t_{1}=t$; $j=2$;
		\ELSE
		\STATE set $t_{0}=t$.
		\ENDIF
		\WHILE {$Y_{j-1}\setminus B_{X}(x_{0},\frac{3}{4}\Delta)\neq\emptyset$}
		\STATE Let $t_{j}\in Y_{j-1}$ be an arbitrary vertex satisfying $d_{X}(x_{0},t_{j})>\frac{3}{4}\Delta$;
		\STATE Let
		$X_j=\texttt{create-petal}(G[Y_j],[0,\Delta/8],x_0,t_j)$;
		\STATE $Y_{j}=Y_{j-1}\setminus X_{j}$;
		\STATE Let $\{x_j,y_{j}\}$ be the unique edge on the shortest path $P_{x_{j}t_j}$ from $x_0$ to $t_j$ in $Y_{j-1}$, where $x_j\in X_j$ and $y_j\in Y_j$.
		\STATE For each edge $e\in P_{x_{j}t_{j}}$, set its weight to be $w(e)/2$; \label{line:edgeWeightReduce}
		\STATE Let $j=j+1;$		
		
		\ENDWHILE
		
		\STATE Let $s=j-1$;
		\STATE Let $X_{0}=Y_{s}$;
		\RETURN $\left(\left\{ X_{j},x_{j},t_{j},\Delta_{x_j}(X_j)\right\} _{j=0}^{s},\left\{ (y_{j},x_{j})\right\} _{j=1}^{s}\right)$;
	\end{algorithmic}
\end{algorithm}
(One may notice that in \lineref{line:edgeWeightReduce} of the \texttt{petal-decomposition} procedure,
the weight of some edges is changed by a factor of 2. This can happen at most once for every edge throughout the  \texttt{hierarchical-petal-decomposition} execution, thus it may affect the padding parameter by a factor of at most 2. This re-weighting is ignored for simplicity.)

Both \texttt{hierarchical-petal-decomposition} and \texttt{petal-decomposition} are essentially the algorithms that appeared in \cite{AN12}. The main difference from their work lies in the \texttt{create-petal} procedure, depicted in \algref{alg-pick-rad}. It carefully selects a radius $r\in[lo,hi]$, which determines the petal $X_j$ together with a connecting edge $(x_j,y_j)\in E$, where $x_j\in X_j$ is the center of $X_j$ and $y_j\in Y_j$. It is important to note that the target $t_0\in X_0$ of the central cluster $X_0$ is determined during the creation of the first petal $X_1$. It uses an alternative metric on the graph, known as {\em cone-metric}:
\begin{definition}[Cone-metric] Given a graph $G=(V,E)$, a subset $X\subseteq V$
	and points $x,y\in X$, define the $\emph{cone-metric}$ $\rho=\rho(X,x,y):X^{2}\to\mathbb{R}^{+}$
	as $\rho(u,v)=\left|\left(d_{X}(x,u)-d_{X}(y,u)\right)-\left(d_{X}(x,v)-d_{X}(y,v)\right)\right|$.
\end{definition}
(The cone-metric is in fact a pseudo-metric, i.e., distances between distinct points are allowed to be 0.) 
The ball $B_{(X,\rho)}(y,r)$ in the cone-metric
$\rho=\rho(X,x,y)$, contains all vertices $u$ whose shortest path to $x$ is increased (additively) by at most $r$ if forced to go through $y$.

In the $\texttt{create-petal}$
algorithm, while working in a subgraph $G[Y]$ with two specified vertices: a center $x_{0}$
and a target $t$, we define $W_{r}\left(Y,x_{0},t\right)=\bigcup_{p\in P_{x_{0}t}:\ d_{Y}(p,t)\le r}B_{(Y,\rho(Y,x_{0},p))}(p,\frac{r-d_{Y}(p,t)}{2})$
which is union of balls in the cone-metric, where any vertex $p$
in the shortest path from $x_{0}$ to $t$ of distance at most $r$
from $t$ is a center of a ball with radius $\frac{r-d_{Y}(p,t)}{2}$.

\begin{figure}
	\begin{center}
		\includegraphics[width=0.52\textwidth]{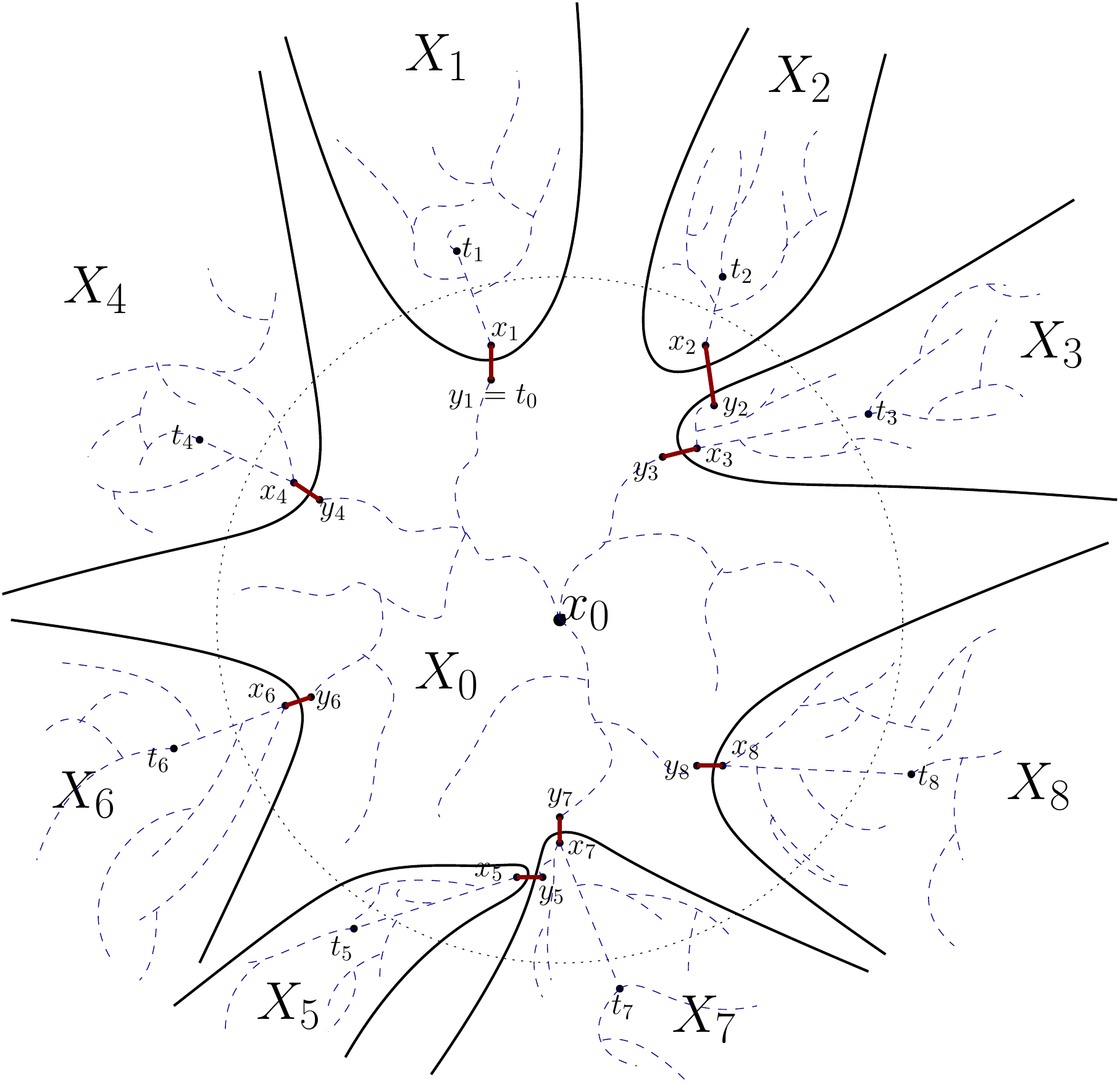}
		\caption{A schematic depiction of a partition done by the \texttt{petal-decomposition} algorithm. The doted ball contain the vertices within $3\Delta/4$ from the center $x_0$. The algorithm iteratively picks a target $t_j$ outside this ball, and builds a petal $X_j$, that will be connected to the rest of the graph by the edge $(x_j,y_j)$. The doted lines represent the trees created recursively in each $X_j$.}
		\label{fig:HPDExample}
	\end{center}
\end{figure}

The following facts are from \cite{AN12}.
\begin{fact}
	\label{FactPetalTreeRadiusBound4} Running \texttt{Hierarchical-petal-decomposition} on input $(G[X],x_0,t,\Delta_{x_0}(X))$ will provide a spanning tree $T$ satisfying
	\[
	\Delta_{x_0}(T)\le 4\Delta_{x_0}(X).
	\]
\end{fact}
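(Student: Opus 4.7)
The plan is to argue by induction on $|X|$. The base case $|X|=1$ is immediate: then $T=G[X]$ is a single vertex and $\Delta_{x_0}(T)=0$. For the inductive step I would assume, for every recursive call of \texttt{hierarchical-petal-decomposition} on a strict subset $(G[X_j],x_j,t_j,\Delta_{x_j}(X_j))$, that the returned tree $T_j$ satisfies $\Delta_{x_j}(T_j)\le 4\Delta_{x_j}(X_j)$.

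Two quantitative facts about the partition $\{X_0,X_1,\dots,X_s\}$ extracted from \texttt{petal-decomposition} drive the bound. First, the termination condition of the \texttt{while} loop forces $X_0=Y_s\subseteq B_X(x_0,3\Delta/4)$, giving $\Delta_{x_0}(X_0)\le 3\Delta/4$. Second, the radius of each petal $X_j$ with $j\ge 1$ is bounded by the upper end of the parameter range passed to \texttt{create-petal}: for $j\ge 2$ this gives $\Delta_{x_j}(X_j)\le \Delta/8$, while for $j=1$ the range $[d_X(x_0,t)-\Delta/2,\,d_X(x_0,t)-\Delta/4]$ together with $d_X(x_0,t)\le \Delta$ yields $\Delta_{x_1}(X_1)\le 3\Delta/4$. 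Combined with the inductive hypothesis this gives $\Delta_{x_0}(T_0)\le 3\Delta$, $\Delta_{x_1}(T_1)\le 3\Delta$, and $\Delta_{x_j}(T_j)\le \Delta/2$ for $j\ge 2$.

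With these bounds in hand I would split by the cluster containing the target vertex $v$. If $v\in X_0$ then $d_T(x_0,v)=d_{T_0}(x_0,v)\le 3\Delta$ and we are done. If $v\in X_j$ for some $j\ge 1$, the tree path from $x_0$ to $v$ in $T$ factors as a prefix reaching the attachment endpoint $x_j$ followed by the internal path in $T_j$ from $x_j$ to $v$ of length at most $\Delta_{x_j}(T_j)$. So it suffices to bound $d_T(x_0,x_j)$ by $\Delta$ in the $j=1$ case and by $7\Delta/2$ in the $j\ge 2$ case.

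The hard part will be controlling $d_T(x_0,x_j)$ for $j\ge 2$, because the other endpoint $y_j$ of the connecting edge lies in $Y_j$, which may itself later be cut into the central cluster or into a higher-indexed petal, producing a chain of connecting edges before the path returns to $x_0$. The natural auxiliary claim is that each connecting edge $(x_j,y_j)$ was chosen to lie on a shortest $x_0$-to-$t_j$ path in $Y_{j-1}$ of total length $d_X(x_0,t_j)\le \Delta$; propagating this property along the (finite) chain, together with the inductive radius bounds on each intermediate $T_{j'}$ that the chain passes through, yields the desired bound on $d_T(x_0,x_j)$ and closes the induction.
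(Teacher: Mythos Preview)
The paper does not prove this statement; it is stated as a fact imported from \cite{AN12}, so there is no in-paper proof to compare against. That said, your inductive skeleton is the right shape, but the proposal has two genuine gaps.

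First, your petal-radius bounds are not justified. A petal $X_j=W_r$ is a union of balls in the \emph{cone metric} $\rho(Y_{j-1},x_0,p)$ over points $p$ on the $x_0$--$t_j$ path, not a ball of radius $r$ in the graph metric around $x_j$. The parameter $r\in[lo,hi]$ is not the same quantity as $\Delta_{x_j}(X_j)$, and the inequality $\Delta_{x_j}(X_j)\le hi$ that you assert (``bounded by the upper end of the parameter range'') requires a separate argument relating cone-metric containment to graph-metric radius from the new center $x_j$. In \cite{AN12} this is its own lemma, and the constant is not $1$.

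Second, and more seriously, the ``hard part'' you flag is essentially the entire content of the theorem, and your sketch does not address the mechanism that makes it work. The chain of connecting edges $(x_j,y_j)$ can indeed be long, and simply knowing that each edge lies on a shortest $x_0$--$t_j$ path of length at most $\Delta$ does not by itself prevent the concatenated tree path from telescoping to something much larger than $4\Delta$. The proof in \cite{AN12} crucially exploits \lineref{line:edgeWeightReduce} of \texttt{petal-decomposition}, which halves the weights along each $P_{x_0 t_j}$; this halving is what allows the accumulated detours through successive petal centers to form a geometric series and stay bounded. Your proposal never invokes this step, and without it the induction on the chain of $y_j$'s does not close.
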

\begin{fact}
	\label{FactPetal3/4Radius} If the \texttt{petal-decomposition} partitions $X$ with center $x_0$ into $X_0,\dots,X_s$ with centers $x_0,\dots,x_s$, then for any $0\le j\le s$ we have $\Delta_{x_j}(X_j)\le (3/4)\cdot\Delta_{x_0}(X)$.
\end{fact}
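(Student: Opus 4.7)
The plan is to split the claim into two regimes: the central piece $X_0 = Y_s$, and the petals $X_1,\dots,X_s$. For both parts, the delicate point is that $\Delta_{x_j}(X_j)$ is measured \emph{inside} $G[X_j]$, so distances can only grow when vertices are removed; I will need a ``shortest paths are preserved'' lemma for petals to reduce everything back to distances in $G[X]$.

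For $X_0$, the termination condition of the while loop gives $Y_s \subseteq B_X(x_0,(3/4)\Delta)$, so every vertex of $X_0$ is within $(3/4)\Delta$ of $x_0$ in $G[X]$. To upgrade this to $G[X_0]$, I would prove a helper statement saying that for any $v \in Y_j$, there exists a shortest $x_0$--$v$ path in $G[Y_{j-1}]$ that avoids $X_j$. This follows from the cone-metric definition $W_r(Y_{j-1},x_0,t_j)$: a vertex $u$ whose $x_0$-shortest path passes through some $p \in P_{x_0 t_j}$ close to $t_j$ is swept into the petal via the cone ball around $p$, so any $v$ left outside the petal must have a shortest path to $x_0$ which deviates from $P_{x_0 t_j}$ well before reaching $t_j$ and stays outside the petal. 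Iterating over $j=1,\dots,s$ yields $d_{X_0}(x_0,v) = d_X(x_0,v) \le (3/4)\Delta$ for all $v \in X_0$.

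For a petal $X_j$ with $j\ge 1$, the center $x_j$ is an endpoint of the bridge edge lying on $P_{x_0 t_j}$, at ``depth'' roughly $d_X(x_0,t_j) - r$. The cone-metric construction $X_j = W_r(Y_{j-1},x_0,t_j)$ means that every $u\in X_j$ lies in a ball $B_{(Y,\rho(Y,x_0,p))}(p,(r-d_Y(p,t_j))/2)$ for some $p \in P_{x_0 t_j}$; expanding $\rho$ and using the triangle inequality gives a path from $u$ to $p$ (and hence to $x_j$) within $G[X_j]$ of length at most $r + d_Y(x_j,t_j)$, up to lower-order terms. For $j\ge 2$ one has $r \le \Delta/8$ and $d_Y(x_j,t_j) \le r \le \Delta/8$, so $\Delta_{x_j}(X_j) \le \Delta/2 < (3/4)\Delta$. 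For $j=1$, $r \le d_X(x_0,t)-\Delta/4$ and $d_X(x_0,t)\le\Delta$, which after arithmetic gives the same $(3/4)\Delta$ bound.

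The main obstacle is the petal-preserves-shortest-paths lemma used for $X_0$: it requires carefully combining the choice of the radius $r$ (which lies in a prescribed range of width $\Delta/8$ or $\Delta/4$) with the definition of the cone-metric ball to argue that no shortest $x_0$--$v$ path for $v\notin X_j$ is ``accidentally'' cut. Beyond this lemma the bounds are mostly arithmetic on the radii $r$ and the distances $d_X(x_0,t_j)$, so once the path-preservation property is in hand the $(3/4)\Delta$ bound for every $X_j$ falls out of the construction.
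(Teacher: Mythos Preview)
The paper does not prove this statement at all; it is quoted verbatim as a fact from \cite{AN12} (see the sentence ``The following facts are from \cite{AN12}'' preceding \factref{FactPetalTreeRadiusBound4}). So there is no in-paper proof to compare against; you are effectively reconstructing the \cite{AN12} argument. Your split into the central piece $X_{0}$ and the petals, together with the path-preservation lemma for $Y_{j}$, is exactly the structure of that argument (the lemma you isolate for $X_{0}$ is \cite[Claim~1]{AN12}, also alluded to in the present paper just before \algref{alg:petal-d}).

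There is, however, a genuine gap in your treatment of the first petal. The petal-radius estimate you write, ``length at most $r+d_{Y}(x_{j},t_{j})$'', is essentially $2r$ since $d_{Y}(x_{j},t_{j})\le r$. For $j\ge 2$ this is harmless ($2r\le \Delta/4$), but for $j=1$ the admissible range is $[d_{X}(x_{0},t)-\Delta/2,\,d_{X}(x_{0},t)-\Delta/4]$, so $r$ may be as large as $(3/4)\Delta$ and your bound yields only $(3/2)\Delta$; the phrase ``after arithmetic gives the same $(3/4)\Delta$ bound'' does not go through. The \cite{AN12} argument does \emph{not} bound the petal radius by a multiple of $r$. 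It uses a path-preservation property on the petal side (dual to the one you stated for $X_{0}$): for every $u\in X_{j}$ the shortest $x_{0}$--$u$ path in $Y_{j-1}$ has a suffix lying entirely in $X_{j}$, which yields $d_{X_{j}}(x_{j},u)\le d_{Y_{j-1}}(x_{0},u)-d_{Y_{j-1}}(x_{0},x_{j})$. Since $d_{Y_{j-1}}(x_{0},x_{1})=d_{X}(x_{0},t)-d_{X}(x_{1},t)\ge d_{X}(x_{0},t)-r\ge \Delta/4$ by the upper end of the $j=1$ range, this gives $\Delta_{x_{1}}(X_{1})\le \Delta-\Delta/4=(3/4)\Delta$. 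In short, the missing ingredient is the petal-side path-preservation lemma; the $Y_{j}$-side lemma you isolated handles $X_{0}$ but is not enough for $X_{1}$.
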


We will need the following observation. Roughly speaking, it says that when the \texttt{petal-decomposition} algorithm is carving out $X_{j+1}$, it is oblivious to the past petals $X_1,\dots,X_j$, edges and targets -- it only cares about $Y_j$ and the original diameter $\Delta$.
\begin{observation}\label{ob:delta}
Assume that \texttt{petal-decomposition} on input $(G\left[X\right],x_{0},t,\Delta_{x_0}(X))$ returns as output $(X_{0},X_{1},\dots,X_{s},\left\{ y_{1},x_{1}\right\} ,\dots,\left\{ y_{s},x_{s}\right\} ,t_{0},\dots,t_{s})$. Then running \texttt{petal-decomposition} on input $(G\left[Y_j\right],x_{0},t_0,\Delta_{x_0}(X))$ will output $(X_{0},X_{j+1}\dots,X_{s},\left\{ y_{j+1},x_{j+1}\right\} ,\dots,\left\{ y_{s},x_{s}\right\} ,t_{0},t_{j+1}\dots,t_{s})$.
\end{observation}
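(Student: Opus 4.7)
The plan is to show that, after the first-petal branch is handled, both runs of \texttt{petal-decomposition} execute identical sequences of main-while-loop iterations. Inspection of \algref{alg:petal-d} reveals that the body of the while loop reads only the current remainder set $Y_{j'-1}$, the center $x_0$, the global parameter $\Delta = \Delta_{x_0}(X)$ (which is held fixed throughout a single invocation of \texttt{petal-decomposition}), and the arbitrary target $t_{j'}$; the termination test reads only the first three. Thus, once we verify that the new run reaches the main while loop in the same state as the original run at the end of its iteration $j$, we may pick the same arbitrary targets $t_{j+1},\dots,t_s$ and invoke the deterministic \texttt{create-petal} subroutine on identical inputs to obtain identical petals and attachment edges.

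The main obstacle is to show that the new run skips the first-petal branch (lines 2--6 of \algref{alg:petal-d}), i.e., that $d_{Y_j}(x_0,t_0)<\Delta/2$. In the original run $t_0$ is defined precisely so that $d_{Y_0}(x_0,t_0)<\Delta/2$: either $t_0=t$ with $d_X(x_0,t)<\Delta/2$, or $t_0=y_1$, the $x_0$-side neighbor of $X_1$ on $P_{x_0t}$, for which the first-petal radius range forces $d_{Y_0}(x_0,y_1)<\Delta/2$. I would then argue that the shortest $Y_0$-path $P$ from $x_0$ to $t_0$ is preserved in $Y_j$, so that $d_{Y_j}(x_0,t_0)\le\mathrm{length}(P)<\Delta/2$. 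The initial segment $P_{x_0y_1}$ is disjoint from $X_1$ by the definition of $y_1$, and each subsequent petal $X_{j'}$ with $2\le j'\le j$ is carved around a target $t_{j'}$ satisfying $d(x_0,t_{j'})>3\Delta/4$ using \texttt{create-petal} with range $[0,\Delta/8]$. From the definition of $W_r(Y_{j'-1},x_0,t_{j'})$ as a union of cone-metric balls of radius at most $\Delta/16$ centered on the last segment of $P_{x_0t_{j'}}$, one derives via a short triangle-inequality/cone-metric computation (borrowed from the analysis of \cite{AN12}) that every vertex admitted into $X_{j'}$ satisfies $d(x_0,\cdot)>\Delta/2$, and so cannot lie on $P$. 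This distance bookkeeping is the principal technical step, and the one I expect to be the most delicate part of the plan.

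Once the first-petal branch is skipped, both runs enter the main while loop in identical state $(Y_j,x_0,\Delta)$. I would then complete the proof by a straightforward induction on $j'=j+1,\dots,s$: at the start of iteration $j'$ the remainder sets agree, the pool of admissible targets $\{v\in Y_{j'-1}:d(x_0,v)>3\Delta/4\}$ agrees, and we may select the same $t_{j'}$; determinism of \texttt{create-petal} then produces the same petal $X_{j'}$ and the same attachment edge $\{x_{j'},y_{j'}\}$, and the edge-reweighting in \lineref{line:edgeWeightReduce} acts on the same path. The termination test $Y_{j'}\setminus B(x_0,3\Delta/4)=\emptyset$ depends only on $Y_{j'}$, so both runs halt at the same $s$, with the same central cluster $X_0=Y_s$ and target $t_0$; this yields the claimed output $(X_0,X_{j+1},\dots,X_s,\{y_{j+1},x_{j+1}\},\dots,\{y_s,x_s\},t_0,t_{j+1},\dots,t_s)$.
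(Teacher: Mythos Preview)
The paper does not actually give a proof of \obsref{ob:delta}; it is stated as an observation, with only the informal remark that ``the \texttt{petal-decomposition} algorithm \dots\ is oblivious to the past petals $X_1,\dots,X_j$ --- it only cares about $Y_j$ and the original diameter $\Delta$.'' So there is no paper proof to compare against; your write-up is considerably more detailed than what the authors provide.

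Your plan is correct and identifies exactly the one point that is not entirely automatic: that the second run skips the first-petal branch, i.e., $d_{Y_j}(x_0,t_0)<\Delta/2$. Your argument for this is right. In the first-petal case the lower endpoint $r\ge d_X(x_0,t)-\Delta/2$ forces $d_X(y_1,t)>r\ge d_X(x_0,t)-\Delta/2$, hence $d_X(x_0,y_1)<\Delta/2$, and the prefix $P_{x_0y_1}$ lies in $Y_1$. For $j'\ge 2$, your cone-metric computation is valid: any $u\in W_r$ with $r\le\Delta/8$ and $d(x_0,t_{j'})>3\Delta/4$ lies in a cone-ball of radius $\le\Delta/16$ around some $p$ with $d(x_0,p)>5\Delta/8$, giving $d(x_0,u)>9\Delta/16>\Delta/2$; so no such petal meets $P$. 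The induction on the while-loop iterations is then routine, exactly as you say.

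One small caveat worth noting (though it is not a defect in your argument relative to the paper): the edge-reweighting in \lineref{line:edgeWeightReduce} during iterations $1,\dots,j$ of the original run may have halved some edges that remain in $Y_j$, whereas the fresh call on $G[Y_j]$ starts with unmodified weights. The paper explicitly declares that this reweighting ``is ignored for simplicity,'' so the observation is meant to be read modulo that convention; you handle the reweighting for iterations $j+1,\dots,s$ correctly.
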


\subsection{Choosing a Radius}\label{subsec:createPetal}

Fix some $1\le j\le s$, and consider carving the petal $X_j$ from the graph induced on $Y=Y_{j-1}$. While the algorithm of \cite{AN12} described a specific way to choose the radius, we require a somewhat more refined choice. The properties of the petal decomposition described above (in \subsectionref{subsec:PetalDecompDesc}) together with \factref{FactPetal3/4Radius} and \factref{FactPetalTreeRadiusBound4},  hold for any radius picked from a given interval. We will now describe the method to select a radius that suits our needs. The \texttt{petal-decomposition} algorithm provides an interval $[lo,hi]$ of size at least $\Delta/8$, and for each $r\in[lo,hi]$ let $W_{r}(Y,x_0,t)\subseteq Y$ denote the petal of radius $r$ (usually we will omit $(Y,x_0,t)$).
The following fact demonstrates that petals are similar to balls.

\begin{fact}\label{FctW_rProp}
For every $y\in W_{r}$ and $l\ge 0$, $B(y,l,Y)\subseteq W_{r+4l}$.
\end{fact}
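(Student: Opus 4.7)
The plan is to prove the fact by reusing, for any $z \in B(y,l,Y)$, the very same path-point $p \in P_{x_0 t}$ that witnesses $y \in W_r$, and then to show that $z$ lies in a slightly larger cone-ball around $p$ corresponding to the parameter $r+4l$. Since $W_r$ is defined as a union over such cone-balls, this suffices.

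First I would unpack the definition: $y \in W_r$ supplies a point $p$ on the shortest path $P_{x_0 t}$ with $d_Y(p,t) \le r$ such that $y \in B_{(Y,\rho)}\bigl(p,\tfrac{r-d_Y(p,t)}{2}\bigr)$, where $\rho = \rho(Y,x_0,p)$. It is convenient to rewrite the cone-metric distance from $p$ in closed form: for any $u \in Y$,
\[
\rho(p,u) \;=\; d_Y(x_0,p) + d_Y(p,u) - d_Y(x_0,u),
\]
which is nonnegative by the triangle inequality in $G[Y]$, so the absolute value in the definition of $\rho$ drops. The hypothesis on $y$ then reads $\rho(p,y) \le (r - d_Y(p,t))/2$.

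The heart of the argument is the bound $\rho(p,z) \le \rho(p,y) + 2l$ whenever $d_Y(y,z) \le l$. This follows by substituting the two triangle-inequality estimates $d_Y(p,z) \le d_Y(p,y) + l$ and $d_Y(x_0,z) \ge d_Y(x_0,y) - l$ into the closed-form expression for $\rho(p,\cdot)$. Combining with the hypothesis on $y$ yields
\[
\rho(p,z) \;\le\; \tfrac{r - d_Y(p,t)}{2} + 2l \;=\; \tfrac{(r+4l) - d_Y(p,t)}{2},
\]
and since $d_Y(p,t) \le r \le r+4l$, the same $p$ witnesses $z \in W_{r+4l}$. I do not anticipate any real obstacle; the argument is a short chain of triangle inequalities. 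The only point worth stressing is the bookkeeping behind the multiplicative $4$: an $l$-slack in ambient distance contributes $+l$ to $d_Y(p,\cdot)$ and another $+l$ through the reverse triangle inequality on $d_Y(x_0,\cdot)$, producing $+2l$ in cone-distance, and this is then doubled to $4l$ in the petal parameter because the cone-ball radii appearing in the definition of $W_r$ carry a factor of $\tfrac{1}{2}$.
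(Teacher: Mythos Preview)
Your proof is correct. The paper itself does not supply a proof of this fact---it is stated as a property of petals imported from \cite{AN12}---so there is nothing to compare against, but your argument is exactly the natural one: reuse the witnessing path-point $p$, exploit that the cone-distance $\rho(p,\cdot)$ is $2$-Lipschitz with respect to $d_Y$, and absorb the resulting $+2l$ slack into the petal parameter via the factor $\tfrac12$ in the cone-ball radius.
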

Note that \factref{FctW_rProp} implies that $W_r$ is monotone in $r$, i.e., for $r\le r'$ it holds that $W_{r}\subseteq W_{r'}$.

Our algorithm will maintain a set of {\em marked} vertices $M\subseteq V$, and will update it in any petal creation. Roughly speaking, the marked vertices are those that are fully padded in the (partial) hierarchical partition generated so far by the algorithm. If initially $|M|=m$, we want that at the end of the process at least $m^{1-1/k}$ vertices will remain marked.
In the partition of $X$ to $X_0,\dots,X_s$, some of the marked vertices will be $\rho$-padded by a petal $X_j$ (w.r.t. $X$), and some of the others will be unmarked, by the following rule.  \factref{FctW_rProp} implies that if we choose a radius $r$ when creating some petal $X_j=W_r$, then all marked vertices in $W_{r-4\Delta/\rho}$ will be $\rho$-padded by $X_j$, and thus remain marked. All the marked vertices in $W_{r+4\Delta/\rho}\setminus W_{r-4\Delta/\rho}$ are considered unmarked from now on, since their $\Delta/\rho$ ball may intersect more than one cluster in the current partition (note that some of these vertices can be outside $X_j$).

Our algorithm to select a radius is based on region growing techniques, similar to those in \algref{alg-strong-ramsey}, but rather more involved.
Since in the petal decomposition framework we cannot pick as center a vertex maximizing the "small ball", we first choose an appropriate range that mimics that choice (see e.g. \lineref{line:a-b} in the algorithm below) -- this is the reason for the extra factor of $\log\log n$.
The basic idea in region growing is to charge the number of marked vertices whose ball is cut by the partition (those in $W_{r+4\Delta/\rho}\setminus W_{r-4\Delta/\rho}$), to those that are saved (in $W_{r-4\Delta/\rho}$). In our setting we are very sensitive to constant factors in this charging scheme (as opposed to the average stretch considered in \cite{EEST05}), because these constants are multiplied throughout the recursion. In particular, we must avoid a range in $[lo,hi]$ that contains more than half of the marked vertices, a constraint which did not exist in previous manifestation of this region growing scheme. To this end, if the first half $[lo,mid]$ (with $mid=(hi+lo)/2$) is not suitable, we must "cut backwards" in the regime $[mid,hi]$, and charge the marked vertices that were removed from $M$ to the remaining graph $Y_j$, rather than to those saved in the created cluster $X_j$.\footnote{We remark that paying a factor of $\log\log(nW)$, where $W$ is the maximum edge weight, might have simplified the algorithm slightly.}

\begin{algorithm}[H]
	\caption{$X=\texttt{create-petal}(G[Y],[lo,hi],x_0,t)$}\label{alg-pick-rad}
	\begin{algorithmic}[1]
		\STATE Let $m=|Y\cap M|$;
		\STATE $L=\lceil1+\log\log m\rceil$;
		\STATE $R=hi-lo$; $mid=(lo+hi)/2=lo+R/2$;
		\STATE For every $r$, denote $W_{r}=W_{r}(Y,x_0,t)$,   $w_r=|M\cap W_r|$;		
		\IF {$w_{mid}\le \frac{m}{2}$}
		\IF {$w_{lo+\frac{R}{2L}}=0$}
		\STATE Set $r=lo+\frac{R}{4L}$;
		\ELSE
		\STATE  Choose $\left[a,b\right]\subseteq\left[lo,mid\right]$\label{line:a-b}
		such that $b-a=\frac{R}{2L}$ and $w_a\ge w_b^2/m$; \COMMENT{see \lemmaref{lem:interval-choose}}
		
		\STATE Pick $r\in\left[a,b\right]$ such that
		$w_{r+\frac{b-a}{2k}}\le w_{r-\frac{b-a}{2k}}\cdot\left(\frac{w_{b}}{w_{a}}\right)^{\frac{1}{k}}$; \COMMENT{see \lemmaref{lem:radiuosRG}}
		\ENDIF
		\ELSE 
		\STATE For every $r\in[lo,hi]$, denote  $q_r=|(Y\setminus W_r)\cap M|$;
		\IF {$q_{hi-\frac{R}{2L}}=0$}
		\STATE Set $r=hi-\frac{R}{4L}$;
		\ELSE
		\STATE  Choose $\left[b,a\right]\subseteq\left[mid,hi\right]$
		such that $a-b=\frac{R}{2L}$ and $q_a\ge q_b^2/m$; \COMMENT{see \lemmaref{lem:interval-choose-Ver2}}
		
		\STATE Pick $r\in\left[b,a\right]$ such that
		$q_{r-\frac{a-b}{2k}}\le q_{r+\frac{a-b}{2k}}\cdot\left(\frac{q_{b}}{q_{a}}\right)^{\frac{1}{k}}$; \COMMENT{see \lemmaref{lem:radiuosRG-Ver2}}	
		\ENDIF
		\ENDIF	
		\STATE $M\leftarrow M\setminus(W_{r+\frac{R}{4Lk}}\setminus W_{r-\frac{R}{4Lk}})$\label{line:M}
		\RETURN $W_r$;
	\end{algorithmic}
\end{algorithm}

\subsection{Proof of Correctness}
Let $z\in V$ be an arbitrary vertex, given a set $M\subseteq V$, let $T$ be the tree returned by calling \texttt{Hierarchical-petal-decomposition} on $(G[V],z,z,\Delta_z(V))$ and marked vertices $M$. There is a natural Strong Diameter Hierarchical Partition (\textbf{SDHP}) ${\cal X}=\{\mathcal{X}_i\}_{i=1}^{\Phi}$ associated with the tree $T$, where
$\mathcal{X}_i$ consists of all the clusters created in level $\Phi-i$ of the recursion (and $\mathcal{X}_\Phi=\{V\}$).
By \factref{FactPetal3/4Radius}, the radius is always non-increasing. Hence ${\cal X}$ is indeed a \textbf{SDHP}. Denote by $\sur(M)\subseteq M$ the set of vertices that remained marked throughout the execution of \texttt{Hierarchical-petal-decomposition}.

\begin{lemma}\label{lem:active-padded}
Suppose $x\in \sur(M)$, then $x$ is $\rho$-fully-padded in ${\cal X}$.
\end{lemma}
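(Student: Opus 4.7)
The plan is a two-level induction showing the lemma holds with $\rho = 128\lceil 1+\log\log|M|\rceil\cdot k = O(k\log\log n)$. The outer induction runs over levels $i=\Phi,\Phi-1,\ldots,2$; the outer hypothesis is that $B(x,\Delta(A_i)/\rho,G)\subseteq A_i$, where $A_i\in\mathcal{X}_i$ is the cluster containing $x$. This is vacuous at $i=\Phi$ since $A_\Phi=V$. The inner induction runs over the petals carved out during the partitioning of $A_i$.

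The first ingredient is a simple translation principle: if $B(x,r,G)\subseteq Y$ for some $Y\ni x$, then every vertex on a shortest $G$-path from $x$ of length at most $r$ already lies in this ball and hence in $Y$, so the path itself stays in $Y$ and therefore $B(x,r,G)=B(x,r,Y)$. This lets me promote padding statements proved inside induced subgraphs to statements in $G$, which is what the padding definition requires.

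The second ingredient is a per-petal local property. When \texttt{create-petal} carves $X_j=W_{r_j}$ out of $G[Y_{j-1}]$ inside $A_i$ (set $\Delta=\Delta(A_i)$), the fact that $x$ remains marked forces, via \lineref{line:M} of \algref{alg-pick-rad}, that $x\notin W_{r_j+R_j/(4L_jk)}\setminus W_{r_j-R_j/(4L_jk)}$. Applying \factref{FctW_rProp} with $l=R_j/(16L_jk)$ then shows that $B(x,R_j/(16L_jk),Y_{j-1})$ is either entirely inside $W_{r_j}$ (when $x\in W_{r_j-R_j/(4L_jk)}$) or disjoint from $W_{r_j}$ (when $x\notin W_{r_j+R_j/(4L_jk)}$). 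Inspecting \algref{alg:petal-d} gives $R_j\ge\Delta/8$, while $L_j\le\lceil 1+\log\log|M|\rceil$ throughout the recursion, so $\Delta/\rho\le R_j/(16L_jk)$ and the same dichotomy carries over to the smaller ball $B(x,\Delta/\rho,Y_{j-1})$.

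To close the outer step, I inductively maintain $B(x,\Delta/\rho,G)\subseteq Y_j$ as long as $x\in Y_j$. The translation principle promotes this to $B(x,\Delta/\rho,G)=B(x,\Delta/\rho,Y_j)$, after which the local dichotomy implies that upon creating $X_{j+1}$ the ball either enters $X_{j+1}$ (happening exactly at the unique $j^*$ with $x\in X_{j^*}=A_{i-1}$, yielding $B(x,\Delta/\rho,G)\subseteq A_{i-1}$) or it stays inside $Y_{j+1}$. Since $\Delta(A_{i-1})\le\Delta(A_i)$ by the refinement property of the \textbf{SDHP}, monotonicity of balls turns the containment into the outer hypothesis at level $i-1$, which is precisely the required padding. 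The main obstacle is the book-keeping between the subgraph distances used in \factref{FctW_rProp} and the $G$-distances in the padding definition; the translation principle bridges this gap, but only as long as we verify at each step that the current ball has not escaped the induced subgraph under consideration, which is exactly what the local property guarantees.
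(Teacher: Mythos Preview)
Your proposal is correct and follows essentially the same approach as the paper. The paper also runs an outer induction over levels (phrased as ``assuming inductively that $x$ was $\rho$-padded by $X$'') and, for the inner step, argues that no earlier petal $X_1,\dots,X_{j-1}$ meets $B$ via a minimal-counterexample argument before using \factref{FctW_rProp} to show $B\subseteq X_j$; your explicit inner induction maintaining $B\subseteq Y_j$ together with the dichotomy is the same reasoning reorganized, and your translation principle is exactly what the paper uses implicitly when it writes ``by the minimality of $j'$ it follows that $B\subseteq Y'$, and thus $\dist(x,y,Y')=\dist(x,y,G)$''.
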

\begin{proof}
Fix any $2\le i\le \Phi$. Let $X\in{\cal X}_i$ be the cluster containing $x$ in the $(\Phi-i)$-th level of the recursion with $\Delta=\Delta(X)$. Assume $X$ was partitioned by \texttt{petal-decomposition} into $X_0,\dots,X_s$, and let $X_j\subseteq X$ be the cluster containing $x\in X_j$. Assuming (inductively) that $x$ was $\rho$-padded by $X$, we need to show that it is also $\rho$-padded by $X_j$, that is, $B=B(x,\Delta/\rho,G)\subseteq X_j$. (Note that $B\subseteq X$ since the radii are non-increasing, so $x$ is padded in all higher levels.)

First we argue that none of the petals $X_1,\dots,X_{j-1}$ intersects $B$. Seeking contradiction, assume it is not the case, and let $1\le j'<j$ be the minimal such that there exists $y\in X_{j'}\cap B$. By the minimality of $j'$ it follows that $B\subseteq Y'=Y_{j'-1}$, and thus $\dist(x,y,Y')=\dist(x,y,G)\le \Delta/\rho$. Let $r'$ be the radius chosen when creating the petal $X_{j'}=W'_{r'}$, and \factref{FctW_rProp} implies that
\[
x\in B(y,\Delta/\rho,Y')\subseteq W'_{r'+4\Delta/\rho}=W'_{r'+R/(4Lk)}~,
\]
where we recall that $\Delta=8R$ and $\rho=2^7Lk$. This is a contradiction to the fact that $x\in \sur(X)$: clearly $x\notin W'_{r'-R/(4Lk)}$ since it is not included in $X_{j'}=W'_{r'}$ (and using the monotonicity of $W'_r$), so it should have been removed from $M$ when creating $X_{j'}$ (in \lineref{line:M} of the algorithm).

For the case $j=0$ the same reasoning shows $B$ does not intersect any petal $X_1,\dots ,X_s$ and we are done. For $j>0$, it remains to show that $B\subseteq X_j$, but this follows by a similar calculation. Let $r$ be the radius chosen for creating the petal $X_j=W_r$, and $Y=Y_{j-1}$. We have $B\subseteq Y$, and since $x\in \sur(X)$ it must be that $x\in W_{r-R/(4Lk)}$. Again by \factref{FctW_rProp} we have
\[
B=B(x,\Delta/\rho,G)=B(x,\Delta/\rho,Y)\subseteq W_{r-R/(4Lk)+4\Delta/\rho} = W_r=X_j~.\qedhere
\]
\end{proof}

\begin{lemma}\label{lem:paddedImpliesStretch}
	Consider a vertex $v\in \sur(M)$, then for every $u\in V$, $\dist(v,u,T)\le 8\rho\cdot \dist(v,u,G)$.
\end{lemma}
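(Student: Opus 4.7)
The plan is to exhibit the minimal level in the hierarchy at which $v$ and $u$ are separated, use padding to lower-bound $\dist(v,u,G)$ by the radius at that level, and use the radius-preservation property of the petal decomposition (\factref{FactPetalTreeRadiusBound4}) to upper-bound $\dist(v,u,T)$ by the radius at that level, times a constant.

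First I would handle the trivial case $u = v$ and then assume $u \neq v$. Let $i^{*}$ be the smallest index $i \ge 2$ such that some cluster $X^{*} \in \mathcal{X}_{i^{*}}$ contains both $v$ and $u$ (such an $i^{*}$ exists because $\mathcal{X}_{\Phi} = \{V\}$; and $i^{*} \ge 2$ since the clusters in $\mathcal{X}_{1}$ are singletons). Let $\Delta^{*} = \Delta(X^{*})$, and let $X' \in \mathcal{X}_{i^{*}-1}$ be the child cluster containing $v$. By the minimality of $i^{*}$, we have $u \notin X'$.

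Now I would apply \lemmaref{lem:active-padded}: since $v \in \sur(M)$, the vertex $v$ is $\rho$-fully-padded in ${\cal X}$, so $v$ is $\rho$-padded by $X'$ w.r.t.\ $X^{*}$, which means $B(v, \Delta^{*}/\rho, G) \subseteq X'$. Since $u \notin X'$, we deduce
\[
\dist(v, u, G) \;>\; \Delta^{*}/\rho~.
\]
On the other hand, the subtree $T^{*}$ of $T$ induced on $X^{*}$ is precisely the output of the recursive call of \texttt{Hierarchical-petal-decomposition} on $X^{*}$, so by \factref{FactPetalTreeRadiusBound4} it has radius at most $4\Delta^{*}$ around its root. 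Therefore
\[
\dist(v, u, T) \;\le\; \dist(v, u, T^{*}) \;\le\; 2 \cdot 4\Delta^{*} \;=\; 8\Delta^{*} \;<\; 8\rho \cdot \dist(v, u, G)~,
\]
which is exactly the claimed bound.

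The argument is essentially a two-line computation once the two ingredients are in place, so there is no serious obstacle; the only thing to double-check is that $v$ and $u$ actually lie in the same recursively-built subtree at level $i^{*}$, which follows from the fact that each cluster $X^{*} \in \mathcal{X}_{i^{*}}$ corresponds to an input of a recursive call of \texttt{Hierarchical-petal-decomposition} and the returned spanning tree is formed by connecting the $T_{j}$'s without leaving $X^{*}$, so distances inside $X^{*}$ measured in $T$ coincide with those measured in $T^{*}$.
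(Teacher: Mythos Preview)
Your proposal is correct and follows essentially the same argument as the paper's proof: pick the minimal level $i$ at which $v$ and $u$ lie in a common cluster, use \lemmaref{lem:active-padded} to lower-bound $\dist(v,u,G)$ by $\Delta/\rho$, and use \factref{FactPetalTreeRadiusBound4} to upper-bound the tree distance by $8\Delta$. Your additional remarks (handling $u=v$ explicitly and checking that $T$-distances within $X^{*}$ agree with $T^{*}$-distances) are fine and do not diverge from the paper's approach.
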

\begin{proof}
Let ${\cal X}=\{\mathcal{X}_i\}_{i=1}^{\Phi}$ be the \textbf{SDHP} associated with $T$, and for $1\le i\le \Phi$ let $A_i\in{\cal X}_i$ be the cluster containing $v$. Take the minimal $2\le i\le \Phi$ such that $u\in A_i$ (there exists such an $i$ since $u\in A_\Phi=V$ and $u\notin A_1=\{v\}$). By \lemmaref{lem:active-padded} $v$ is $\rho$-fully-padded, so we have that $B(v,\Delta/\rho,G)\subseteq A_{i-1}$, where $\Delta=\Delta(A_i)$. But as $u\notin A_{i-1}$, it must be that $\dist(u,v,G)>\Delta/\rho$. Since both $u,v\in A_i$, \factref{FactPetalTreeRadiusBound4} implies that the radius of the tree created for $A_i$ is at most $4\Delta$, so that
	\[
	\dist(u,v,T)\le 2\cdot 4\Delta\le 8\rho\cdot \dist(u,v,G)~.\qedhere
	\]
\end{proof}

\begin{lemma}\label{lem:SavedTerminals}
$|\sur(M)|\ge |M|^{1-1/k}$.
\end{lemma}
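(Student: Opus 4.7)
The plan is to prove, by induction on the recursion tree of \texttt{hierarchical-petal-decomposition}, the stronger statement that $|\sur(X)| \ge m(X)^{\alpha}$ for every cluster $X$ encountered during the recursion, where $\alpha=1-1/k$ and $m(X)$ denotes the number of marked vertices in $X$ at the moment $X$ is about to be partitioned. Applied at the root $X=V$ this yields the lemma, since $m(V) = |M|$. The base case (singleton $X$) is immediate: $m(X) \in \{0,1\}$ and $|\sur(X)| = m(X) \ge m(X)^{\alpha}$.

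For the inductive step, fix $X$ and let $X_0, X_1, \dots, X_s$ be the partition produced by \texttt{petal-decomposition}, with $X_1,\dots,X_s$ the petals in order of creation and $X_0$ the central piece. Write $m_j$ for the number of marked vertices remaining in $Y_{j-1}$ right before the $j$-th petal is carved, so $m_1 = m(X)$, $m_{s+1} = m(X_0)$, and for $j\ge 1$, $m(X_j) = w_{r_j}^{-}$ using the shorthand $w_r^{\pm} := w_{r \pm R_j/(4L_j k)}$ from \algref{alg-pick-rad}. The core of the step is the single-petal telescoping inequality
$$
m_{j+1}^{\alpha} + m(X_j)^{\alpha} \;\ge\; m_j^{\alpha} \qquad \text{for every } 1\le j \le s.
$$
Summing this over $j=1,\dots,s$ causes the intermediate $m_j^{\alpha}$ terms to cancel, yielding $m(X_0)^{\alpha} + \sum_{j=1}^{s} m(X_j)^{\alpha} \ge m(X)^{\alpha}$; combined with the inductive hypothesis $|\sur(X_j)| \ge m(X_j)^{\alpha}$ applied to each child and the identity $|\sur(X)| = \sum_{j=0}^{s} |\sur(X_j)|$, this closes the induction.

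To establish the single-petal inequality in Case~1 ($w_{mid}\le m_j/2$) of \texttt{create-petal}, a standard midpoint-of-the-tightest-subinterval argument inside $[a_j,b_j]$ places the chosen $r_j$ so that $w_{r_j}^{-} \ge w_{a_j}$, $w_{r_j}^{+} \le w_{b_j}$, and $w_{r_j}^{+} \le w_{r_j}^{-}(w_{b_j}/w_{a_j})^{1/k}$. Combined with the key property $w_{a_j} \ge w_{b_j}^{2}/m_j$ supplied by \lemmaref{lem:interval-choose}, this chains to $w_{r_j}^{+} \le w_{r_j}^{-}(m_j/w_{r_j}^{+})^{1/k}$, equivalently $m(X_j) \ge m_j\cdot\beta^{(k+1)/k}$, where $\beta := w_{r_j}^{+}/m_j \in [0, 1/2]$. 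Dividing the target inequality by $m_j^{\alpha}$ reduces it to
$$
(1-\beta)^{\alpha} + \beta^{(k^{2}-1)/k^{2}} \;\ge\; 1 \qquad \text{for all } \beta \in [0,1].
$$
Because $(k^{2}-1)/k^{2} \ge \alpha$ and both exponents lie in $(0,1)$, the function $g$ on the left is concave on $[0,1]$ with $g(0) = g(1) = 1$, so by concavity $g \ge 1$ on the whole interval. Case~2 (backwards growth, $w_{mid} > m_j/2$) is fully symmetric once one passes to $q_r := m_j - w_r$: then $m(X_j) = m_j - q_{r_j}^{-}$ and $m_{j+1} = q_{r_j}^{+}$, and the analogous chaining with $\gamma := q_{r_j}^{-}/m_j \in [0,1/2]$ reduces the inequality to the same concavity statement. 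The two degenerate sub-cases ($w_{lo+R/(2L)}=0$ or $q_{hi-R/(2L)}=0$) are trivial since no marked vertex is unmarked.

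The main obstacle will be arranging the single-petal inequality so that the squaring in \lemmaref{lem:interval-choose} upgrades the exponent on $\beta$ from $\alpha$ to some $\alpha' > \alpha$; this upgrade is precisely what makes the two-term function concave with its minimum at the endpoints rather than convex, and it is the reason the algorithm pays the $\log\log m$ factor built into $L$. The case split on $w_{mid}$ versus $m_j/2$ is equally indispensable, because without the bound $\beta \le 1/2$ (respectively $\gamma \le 1/2$) one could not localize into a regime where the endpoint values of $g$ control its minimum.
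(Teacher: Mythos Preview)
Your plan is correct, but it takes a somewhat different route from the paper's proof and works harder than necessary in one place.

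\textbf{Structural difference.} The paper does not telescope over all petals $X_1,\dots,X_s$ at once. Instead it peels off only the first petal $X_1$, and invokes \obsref{ob:delta} to regard the remaining run of \texttt{petal-decomposition} on $Y_1$ as a fresh recursive call; since $|X_1|,|Y_1|<|X|$, strong induction on $|X|$ applies directly to both pieces. Your telescoping is equivalent, but the paper's framing avoids having to track all $m_j$ simultaneously.

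\textbf{The key inequality.} For the forward case ($w_{mid}\le m/2$), the paper does \emph{not} use your chain $w_{r}^{+}\le w_b$ followed by a concavity argument. It uses instead the bound $|M_1|=w_{r}^{-}\le w_b$ (which holds for \emph{every} $r\in[a,b]$, with no assumption on how $r$ is selected inside the interval). Combined with $w_a\ge w_b^2/m$ and the region-growing guarantee, this gives the \emph{linear} estimate
\[
|\sur(X_1)|\ \ge\ |M_1|^{\alpha}\ \ge\ \frac{|\res(X_1)|}{m^{1/k}}\ =\ \beta\, m^{\alpha},
\]
which adds to $|\sur(Y_1)|\ge |M'|^{\alpha}=m^{\alpha}(1-\beta)^{\alpha}\ge m^{\alpha}(1-\beta)$ to yield exactly $m^{\alpha}$. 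No concavity of $g(\beta)=(1-\beta)^{\alpha}+\beta^{(k^2-1)/k^2}$ is needed; only the elementary bound $x^{\alpha}\ge x$ on $[0,1]$. Your concavity argument is valid (both summands have exponents in $(0,1)$, and $g(0)=g(1)=1$), but it is a detour, and it relies on $w_r^{+}\le w_b$, which is guaranteed only for the specific grid of radii used in the proof of \lemmaref{lem:radiuosRG}, not for an arbitrary $r\in[a,b]$ satisfying the ratio condition.

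\textbf{A minor misstatement.} Your closing remark that the restriction $\beta\le 1/2$ is ``indispensable'' for the concavity step is not right: your own argument already shows $g\ge 1$ on all of $[0,1]$. The case split on $w_{mid}$ is needed so that \lemmaref{lem:interval-choose} (respectively \lemmaref{lem:interval-choose-Ver2}) applies and an interval $[a,b]$ with the squaring property exists; it plays no further role once the single-petal inequality is set up.
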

\begin{proof}

We prove by induction on $|X|$ that if a cluster $X\in{\cal X}_i$ (for some $1\le i\le \Phi$) has currently $m$ marked vertices, then at the end of the process at least $m^{1-1/k}$ of them will remain marked.

\sloppy The base case when $X$ is a singleton is trivial. For the inductive step, assume we call \texttt{petal-decomposition} on $(G[X],x_0,t,\Delta)$ with $\Delta\ge\Delta_{x_0}(X)$ and the current marked vertices $\hat{M}$. Assume that the \texttt{petal-decomposition} algorithm does a non-trivial partition of $X$ to $X_0,\dots,X_s$ (if it is the case that all vertices are sufficiently close to $x_0$, then no petals will be created, and the\\ \texttt{hierarchical-petal-decomposition} will simply recurse on $(G[X],x_0,t,\Delta_{x_0}(X))$, so we can ignore this case).  Denote by $M_j$ the marked vertices that remain in $X_j$ (just before the recursive call on $X_j$), and recall that $\sur(X_j)$ is the set of vertices of $M_j$ that remain marked until the end of the \texttt{hierarchical-petal-decomposition} algorithm. Then $\sur(X)=\bigcup_{0\le j\le s}\sur(X_j)$, and we want to prove that $|\sur(X)|\ge m^{1-1/k}$.

Let $X_1=W_r$ be the first petal created by the \texttt{petal-decomposition} algorithm, and $Y_1=X\setminus X_1$.
Denote by $\res(X_1)=W_{r+R/(4Lk)}\cap \hat{M}$ the responsibility set for $X_1$ (i.e. the marked vertices that are either in $M_1$ or were removed from $\hat{M}$ when $X_1$ was created). Define $M'=\hat{M}\setminus \res(X_1)$, the set of marked vertices that remain in $Y_1$.
By \obsref{ob:delta}, we can consider the remaining execution of \texttt{petal-decomposition} on $Y_1$ as a new recursive call of \texttt{petal-decomposition} with input $(G[Y_1],x_0,t_0,\Delta)$ and marked vertices $M'$. Since $|X_1|,|Y_1|<|X|$, the induction hypothesis implies that $|\sur(X_1)|\ge|M_1|^{1-1/k}$ and $|\sur(Y_1)|\ge|M'|^{1-1/k}$.

We now do a case analysis according to the choice of radius in \algref{alg-pick-rad}.
\begin{enumerate}
\item {\bf Case 1:} $w_{mid}\le m/2 $ and $w_{lo+R/(2L)}=0$. In this case we set $r=lo+R/(4L)$. Note that $w_{r+R/(4Lk)}\le w_{lo+R/(2L)}=0$, so $M'=\hat{M}$, and by the induction hypothesis on $Y_1$, the number of fully padded vertices is $|\sur(X)|= |\sur(Y_1)|\ge|\hat{M}|^{1-1/k}=m^{1-1/k}$, as required.

\item {\bf Case 2:} $w_{mid}\le m/2 $ and $w_{lo+R/(2L)}>0$. In this case we pick $a,b\in[lo,hi]$ so that $b-a=R/(2L)$ and
    \begin{equation}\label{eq:wawb}
    w_a>w_b^2/m~,
    \end{equation} and also choose $r\in[a,b]$ such that $w_{r+\frac{b-a}{2k}}\le w_{r-\frac{b-a}{2k}}\cdot\left(\frac{w_{b}}{w_{a}}\right)^{1/k}$. As $\frac{b-a}{2k}=\frac{R}{4Lk}$ and $|M_1|=w_{r-R/(4Lk)}$ we have that
    \begin{equation}\label{eq:KC}
    |M_1|\ge\res(X_1)\cdot\left(\frac{w_{a}}{w_{b}}\right)^{1/k}~.
    \end{equation}
    By the induction hypothesis on $X_1$ we have that
	\begin{align*}
	|\sur(X_{1})| & \ge\frac{|M_1|}{|M_1|^{1/k}}\stackrel{\eqref{eq:KC}}{\ge}|\res(X_{1})|\cdot\left(\frac{w_{a}}{|M_1|\cdot w_{b}}\right)^{1/k}\\
	& \stackrel{\eqref{eq:wawb}}{\ge}|\res(X_{1})|\cdot\left(\frac{w_{b}}{m\cdot|M_1|}\right)^{1/k}\ge\frac{|\res(X_{1})|}{m^{1/k}}~,
	\end{align*}
    where in the last inequality we use that $|M_1|= w_{r-(b-a)/(2k)}\le w_b$. Now by the induction hypothesis on $Y_1$ we get
    \begin{align*}
    |\sur(X)| & =|\sur(Y_{1})|+|\sur(X_{1})|\\
    & \ge|M'|^{1-1/k}+\frac{|\res(X_{1})|}{m^{1/k}}\ge\frac{|M'|+|\res(X_{1})|}{m^{1/k}}=\frac{|\hat{M}|}{m^{1/k}}=m^{1-1/k}
    \end{align*}

\item {\bf Case 3:} $w_{mid}> m/2 $ and $q_{hi-R/(2L)}=0$. In this case we set $r=hi-R/(4L)$. Note that $q_{r-R/(4Lk)}\le q_{hi-R/(2L)}=0$ (recall that $q_r$ is non-increasing in $r$, by \factref{FctW_rProp}), so $M_1=\hat{M}$, and by the induction hypothesis on $X_1$, $|\sur(X)|=|\sur(X_1)|\ge|M_1|^{1-1/k}=m^{1-1/k}$, as required.

\item {\bf Case 4:} $w_{mid}> m/2 $ and $q_{hi-R/(2L)}>0$. In this case we pick $a,b\in[lo,hi]$ so that $a-b=R/(2L)$ and
    \begin{equation}\label{eq:qaqb}
    q_a>q_b^2/m~,
    \end{equation}
    and also choose $r\in[b,a]$ such that $q_{r-\frac{b-a}{2k}}\le q_{r+\frac{b-a}{2k}}\cdot\left(\frac{q_{b}}{q_{a}}\right)^{1/k}$. In this case when we cut "backwards", we shift the responsibility for the vertices unmarked by the creation of $X_1$ to $Y_1$. This is captured by defining $\res(Y_1)=\hat{M}\setminus M_1$.
    Since $|M'|=q_{r+\frac{a-b}{2k}}$ and $|\res(Y_1)|= q_{r-\frac{a-b}{2k}}$ we have
    \begin{equation}\label{eq:KCq}
    |M'|\ge|\res(Y_1)|\cdot\left(\frac{q_{a}}{q_{b}}\right)^{1/k}~.
    \end{equation}
    By the induction hypothesis on $Y_1$ we have that
	\begin{align*}
	|\sur(Y_{1})| & \ge\frac{|M'|}{|M'|^{1/k}}\stackrel{\eqref{eq:KCq}}{\ge}|\res(Y_{1})|\cdot\left(\frac{q_{a}}{|M'|\cdot q_{b}}\right)^{1/k}\\
	& \stackrel{\eqref{eq:qaqb}}{\ge}|\res(Y_{1})|\cdot\left(\frac{q_{b}}{m\cdot|M'|}\right)^{1/k}\ge\frac{|\res(Y_{1})|}{m^{1/k}}~,
	\end{align*}
    where in the last inequality we use that $|M'|= q_{r+(a-b)/(2k)}\le q_b$. Now by the induction hypothesis on $X_1$ we get
    \begin{align*}
    |\sur(X)| & =|\sur(Y_{1})|+|\sur(X_{1})|\\
    & \ge\frac{|\res(Y_{1})|}{m^{1/k}}+|M_1|^{1-1/k}\ge\frac{|\res(Y_{1})|+|M_1|}{m^{1/k}}=\frac{|\hat{M}|}{m^{1/k}}=m^{1-1/k}~.
    \end{align*}
\end{enumerate}

\end{proof}
From \lemmaref{lem:paddedImpliesStretch} and \lemmaref{lem:SavedTerminals} we derive the following theorem.
\begin{theorem}\label{thm:main}
Let $G=(V,E)$ be a weighted graph, fix a set $M\subseteq V$ of size $m$ and a parameter $k\ge 1$. There exists a spanning tree $T$ of $G$, and a set $\sur(M)\subseteq M$ of size at least $m^{1-1/k}$, such that for every $v\in \sur(M)$ and every $u\in V$ it holds that $\dist(v,u,T)\le O(k\log\log m)\cdot \dist(v,u,G)$.
\end{theorem}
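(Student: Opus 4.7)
The plan is to run the \texttt{hierarchical-petal-decomposition} algorithm on $(G[V], z, z, \Delta_z(V))$ for an arbitrary root $z \in V$, with the initial marked set $M$, and then apply the two main lemmas that were already established. Since essentially all the work has been pushed into \lemmaref{lem:paddedImpliesStretch} and \lemmaref{lem:SavedTerminals}, the theorem follows by composition.

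First, I would note that the algorithm outputs a spanning tree $T$ of $G$, together with its associated strong-diameter hierarchical partition ${\cal X}$ (with the top level being $\{V\}$ and the bottom being singletons), and a set $\sur(M) \subseteq M$ of vertices that remained marked throughout. By \lemmaref{lem:SavedTerminals}, we immediately get $|\sur(M)| \ge m^{1-1/k}$, which handles the size requirement.

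Next, for any $v \in \sur(M)$ and any $u \in V$, \lemmaref{lem:paddedImpliesStretch} gives $\dist(v,u,T) \le 8\rho \cdot \dist(v,u,G)$. It remains only to unpack the value of $\rho$. Recalling from \subsectionref{subsec:createPetal} that $\rho = 2^7 L k$ where $L = \lceil 1 + \log\log m\rceil$, we have $8\rho = O(k \log\log m)$, giving the desired stretch bound.

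The only potentially tricky bookkeeping point is that $m$ as used in the definition of $L$ inside \texttt{create-petal} refers to the number of marked vertices currently in the local cluster $Y$, which may be smaller than the initial $|M|=m$; however, since $L$ only grows with this quantity and is monotonically bounded by $\lceil 1 + \log\log |M|\rceil$ throughout the recursion, the worst-case padding parameter $\rho$ over all levels is still $O(k\log\log m)$, so the stretch bound carries through uniformly. Thus combining the two lemmas yields the theorem; there is no further obstacle.
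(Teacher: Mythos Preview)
Your proposal is correct and matches the paper's approach exactly: the paper itself derives \theoremref{thm:main} directly from \lemmaref{lem:paddedImpliesStretch} and \lemmaref{lem:SavedTerminals} with no additional argument, and you have simply spelled out this composition together with the unpacking of $\rho=2^7Lk=O(k\log\log m)$. Your remark about the local versus global value of $m$ in the definition of $L$ is a nice clarification that the paper leaves implicit.
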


We conclude with the proof of our main result.
\begin{proof}[Proof of \theoremref{thm:main-col}]
Set $M_1=V$, and for $i\ge 1$ define $M_{i+1}=M_i\setminus\sur(M_i)$. We shall apply \theoremref{thm:main} iteratively, where $M_i$ is the set of vertices given as input to the $i$-th iteration, that has size $|M_i|=m_i$. Let $T_i$ be the tree created in iteration $i$. By \theoremref{thm:main} the sizes $m_1,m_2,\dots$ obey the recurrence $m_1=n$ and $m_{i+1}\le m_i-m_i^{1-1/k}$, which implies that after $k\cdot n^{1/k}$ iterations we will have $m_{k\cdot n^{1/k}+1}<1$ (see \cite[Lemma 4.2]{MN07}), and thus every vertex is in $\sur(M_i)$ for some $1\le i\le k\cdot n^{1/k}$. For each $v\in V$, let $\home(v)$ be the tree $T_i$ such that $v\in \sur(M_i)$.
\end{proof}

\subsection{Routing with Short Labels}\label{sec:routing}

In this section we prove \theoremref{thm:route}.
We first use a result of \cite{TZ01b} concerning routing in trees.
\begin{theorem}[\cite{TZ01b}]\label{thm:tree-routh}
	For any tree $T=(V,E)$ (where $|V|=n$), there is a routing scheme with stretch 1 that has routing tables of size $O(b)$ and labels of size $(1+o(1))\log_bn$. The decision time in each vertex is $O(1)$.
\end{theorem}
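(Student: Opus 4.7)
The plan is to build the routing scheme from a $b$-ary balanced hierarchical decomposition of $T$. I would root $T$ at an arbitrary vertex and compute a centroid decomposition: find a vertex $c$ whose removal splits $T$ into subtrees of size at most $n/2$, recurse on each subtree, and organize the centroids in a decomposition tree of depth at most $\log_2 n$. To bring the label length down to $(1+o(1))\log_b n$, I would batch $\log_2 b$ consecutive levels of the centroid hierarchy into a single \emph{super-level}, so that one super-level shrinks the current piece by a factor of $b$ and there are $(1+o(1))\log_b n$ super-levels on any root-to-leaf path.

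The label $L(v)$ consists of one $(\log_2 b)$-bit entry per super-level, naming which of at most $b$ sub-pieces contains $v$ at that super-level, together with a short identifier of $v$ inside its innermost piece; this totals $(1+o(1))\log_b n$ entries, or $(1+o(1))\log n$ bits, fitting in $O(1)$ machine words on the RAM model. Each vertex $u$ stores a routing table of size $O(b)$: at the unique super-level for which $u$ is the super-centroid, $u$ records the $b$ outgoing edges, one toward each sub-piece; at every other super-level in which $u$ participates, $u$ stores only the single edge that leads toward the super-centroid of that super-level (a ``parent pointer''). Summing across the $(1+o(1))\log_b n$ super-levels $u$ belongs to gives $O(b + \log_b n) = O(b)$ space.

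Routing from $s$ to $t$ proceeds as follows. The header carries $L(t)$. An intermediate vertex $u$ computes in $O(1)$ the highest super-level $i^\ast$ at which $L(u)$ and $L(t)$ disagree, using a single XOR and a leading-zero count on a constant number of machine words. Let $c^\ast$ be the super-centroid at super-level $i^\ast$ for $u$'s piece; $c^\ast$ is the unique super-centroid that lies on the tree-path from $s$ to $t$. If $u \neq c^\ast$, forward along the pre-stored parent pointer for super-level $i^\ast$; if $u = c^\ast$, read the $i^\ast$-th entry of $L(t)$ and look up the corresponding sub-piece in $u$'s $O(b)$-sized super-centroid table. Because every forwarding step strictly advances along the unique tree-path from $s$ to $t$, the scheme has stretch exactly $1$, and every decision is a constant number of word operations and a table look-up.

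The main obstacle is the simultaneous achievement of $O(b)$ table size, $(1+o(1))\log_b n$ label size, and $O(1)$ decision time, since the most natural schemes trade one against another (for instance, a table of size $O(b\log_b n)$ lets any vertex reroute on its own, while a label whose entries are $\lceil\log_2(b+1)\rceil$ bits only yields a constant-factor blowup in bits). The key device is to put nearly all routing intelligence at the super-centroid and have every other vertex know only a parent pointer, making each vertex the ``expert'' for at most one super-level; pushing the entry width to $\log_2 b$ bits rather than $\lceil\log_2(b+1)\rceil$ requires carefully compressing the per-super-level identifiers (using, e.g., a near-tight encoding of the $b$-way split sizes) and packing them contiguously into words so that extracting the $i^\ast$-th entry stays constant time.
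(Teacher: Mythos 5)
This is a cited result from Thorup and Zwick \cite{TZ01b}; the paper itself gives no proof, so I compare against their scheme. Your centroid-decomposition approach has a genuine gap: a tree centroid can have degree up to $n-1$, so removing it (or a run of $\log_2 b$ successive centroids) can produce far more than $b$ sub-pieces. Already on a star, the single centroid yields $n-1$ singleton pieces. This breaks your construction at two points at once: a $(\log_2 b)$-bit entry per super-level cannot name which sub-piece the destination lies in, and the table you place at a ``super-centroid,'' one outgoing edge per sub-piece, has size $\Theta(\deg(c))$ rather than $O(b)$. A further symptom: your label count of $(1+o(1))\log_b n$ entries, each $\log_2 b$ bits, is $(1+o(1))\log_2 n$ bits, i.e.\ $O(1)$ machine words, whereas the theorem's $(1+o(1))\log_b n$ is measured in \emph{words} (the paper instantiates it with $b=2$ to get $\Theta(\log n)$-word labels). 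Claiming $O(1)$-word labels with $O(b)$-word tables and stretch $1$ on arbitrary trees is a substantially stronger statement than the theorem, and should have raised a flag.

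What is missing is a decomposition that couples bounded branching with a geometric size drop, and that is what Thorup and Zwick actually provide. Their tree scheme uses a DFS numbering plus a weighted light/heavy classification of edges: the edge to a child $c$ of $u$ is $b$-heavy when $|T_c|\ge|T_u|/b$, so each vertex has fewer than $b$ such children (whose ports and DFS intervals fit in its $O(b)$-word table), while any root-to-leaf path crosses fewer than $\log_b n$ $b$-light edges, whose ports (one word each) are what the destination's label must carry. The bounded branching in the table is forced by the weighting, and each light edge in the label is charged to a factor-$b$ drop in subtree size. Centroid decomposition gives you the size drop but not the bounded branching, and batching levels into super-levels does not repair that.
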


Combining \theoremref{thm:main-col} and \theoremref{thm:tree-routh} we can construct a routing scheme. Let $\mathcal{T}$ be the set of trees from \theoremref{thm:main-col}.
Each tree $T\in\mathcal{T}$ is associated with a routing scheme given by \theoremref{thm:tree-routh}. Set $L_T(x)$ be the label of the vertex $x$ in the routing scheme of the tree $T$.

In our scheme, the routing table of each vertex will be the collection of its routing tables in all the trees in ${\cal T}$. Hence the table size is $O(b)\cdot |\mathcal{T}|=O(k\cdot b\cdot n^{1/k})$.
The label of each $x\in V$ will be   $(\home(x),L_{\home(x)}(x))$, i.e., the name of the home tree of $x$ and the label of $x$ in that tree. The label size is $1+(1+o(1))\log_bn=(1+o(1))\log_bn$.

The routing is done in a straightforward manner, to route from $y$ to $x$, we extract $\home(x)$ from the given label of $x$, and simply use the routing scheme of the tree $\home(x)$. Note that this process takes $O(1)$ time, and is independent of the routing path traversed so far. Since all vertices store in their routing table the appropriate routing information for $\home(x)$, the routing can be completed.

\bibliographystyle{alpha}
\bibliography{bib-extended,art}

\appendix
\section{Proof of Correctness for \algref{alg-pick-rad}}
In this section we prove that the choices made in the \texttt{create-petal} procedure are all legal. In all the Lemmas that follow, we shall use the notation in \algref{alg-pick-rad}.
\begin{lemma}\label{lem:interval-choose}
	If $w_{mid}\le\frac{m}{2}$ and $w_{lo+\frac{R}{2L}}\ge1$,
	then
	there is $\left[a,b\right]\subseteq\left[lo,mid\right]$
	such that $b-a=\frac{R}{2L}$ and $w_a\ge w_b^2/m$.
\end{lemma}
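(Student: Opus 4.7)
Plan of proof. I would partition the interval $[lo,mid]$ (which has length $R/2$) into $L$ consecutive subintervals of length $R/(2L)$, by setting $r_i = lo + i\cdot R/(2L)$ for $i = 0,1,\dots,L$, so that $r_L = mid$. Among the $L$ candidate pairs $(a,b) = (r_i, r_{i+1})$, I will argue that at least one must satisfy $w_a \ge w_b^2/m$. The normalized quantities $\alpha_i := w_{r_i}/m$ are nondecreasing in $i$ (since $W_r$ is monotone in $r$ by \factref{FctW_rProp}), and the hypotheses give us two-sided control: $\alpha_1 \ge 1/m$ (because $w_{r_1} = w_{lo + R/(2L)} \ge 1$) and $\alpha_L = w_{mid}/m \le 1/2$.

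The core of the argument is a contradiction. Suppose no suitable pair exists; then in particular $\alpha_i < \alpha_{i+1}^2$ for every $i \in \{1,2,\dots,L-1\}$. Since $\alpha_1 > 0$, monotonicity implies all these $\alpha_i$ are strictly positive and at most $1/2$, so $\beta_i := -\log_2 \alpha_i$ is well-defined and satisfies $\beta_i \ge 1$. Taking $-\log_2$ of the supposed inequality flips the direction and yields $\beta_i > 2\beta_{i+1}$, hence
\[
\beta_1 \;>\; 2\beta_2 \;>\; 4\beta_3 \;>\; \cdots \;>\; 2^{L-1}\beta_L \;\ge\; 2^{L-1}.
\]
On the other hand, $\alpha_1 \ge 1/m$ gives $\beta_1 \le \log_2 m$.

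Combining the two bounds yields $\log_2 m > 2^{L-1}$, i.e., $L - 1 < \log_2 \log_2 m$. This contradicts the definition $L = \lceil 1 + \log\log m\rceil \ge 1 + \log_2 \log_2 m$. The only subtlety to watch is the boundary case $\alpha_0 = 0$ or vanishing $\alpha_i$'s, which is precisely why the chain is started from $i=1$ using the lower bound $w_{r_1} \ge 1$ provided by the hypothesis; everything else is a routine doubling/telescoping calculation.
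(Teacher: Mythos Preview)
Your proof is correct and follows essentially the same approach as the paper's. Both arguments partition $[lo,mid]$ into $L$ equal pieces, assume for contradiction that the desired inequality fails on every consecutive pair from $r_1$ up to $r_L=mid$, and telescope the resulting relation $w_{r_i}<w_{r_{i+1}}^2/m$ to force $w_{mid}>m/2$ (equivalently, in your logarithmic variables, $\beta_1>2^{L-1}$), contradicting $L\ge 1+\log_2\log_2 m$. Your passage to $\beta_i=-\log_2\alpha_i$ is just a cosmetic repackaging of the paper's direct iteration $w_{mid}>m^{1-2^{-(L-1)}}\cdot w_{r_1}^{2^{-(L-1)}}$; the chain, the endpoints $r_1$ and $r_L$, and the use of the two hypotheses are identical.
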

\begin{proof}
	Seeking contradiction, assume that for every such $a,b$ with $b-a=\frac{R}{2L}$ it holds that $w_b>\sqrt{m\cdot w_a}$. Applying this on $b=mid-\frac{i R}{2L}$ and $a=mid-\frac{(i+1) R}{2L}$ for every $i=0,1,\dots,L-2$, we have that
	\begin{align*}
	w_{mid} & >m^{1/2}\cdot w_{mid-\frac{R}{2L}}^{1/2}>\dots
	 >m^{1-2^{-(L-1)}}\cdot w_{mid-\frac{(L-1)R}{2L}}^{2^{-(L-1)}}\ge m\cdot2^{-1}\cdot w_{lo+\frac{R}{2L}}^{1/(2\log m)}\ge\frac{m}{2}~,
	\end{align*}
	where we used that $1+\log\log m\le L\le 2+\log\log m$ and $mid=lo+R/2$. In the last inequality we also used that $w_a\ge 1$, which follows since $b=a+\frac{R}{2L}\ge lo+\frac{R}{2L}$, thus $w_b\ge 1$, and in particular $w_{a}\ge w_{b}^{2}/m>0$. The contradiction follows.
\end{proof}
\begin{lemma}\label{lem:radiuosRG}
	There is $r\in\left[a,b\right]$ such that $w_{r+\frac{b-a}{2k}}\le w_{r-\frac{b-a}{2k}}\cdot\left(\frac{w_{b}}{w_{a}}\right)^{\frac{1}{k}}$.
\end{lemma}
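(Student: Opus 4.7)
The plan is to proceed by contradiction using a standard chained region-growing argument, partitioning the interval $[a,b]$ into $k$ consecutive sub-intervals each of length $(b-a)/k$, centered at a discrete set of candidate radii, and showing that if the desired $r$ fails to exist at every candidate, then $w_b$ strictly exceeds itself.

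Concretely, I would first suppose for contradiction that for every $r \in [a,b]$ it holds that $w_{r+\frac{b-a}{2k}} > w_{r-\frac{b-a}{2k}} \cdot \left(\frac{w_b}{w_a}\right)^{1/k}$. Then I would introduce the $k$ specific test radii
\[
r_i \;=\; a + \frac{(2i-1)(b-a)}{2k}, \qquad i = 1,2,\dots,k,
\]
and observe that each $r_i \in [a,b]$ and that
\[
r_i - \tfrac{b-a}{2k} \;=\; a + \tfrac{(i-1)(b-a)}{k}, \qquad r_i + \tfrac{b-a}{2k} \;=\; a + \tfrac{i(b-a)}{k}.
\]
So setting $p_i = w_{a+i(b-a)/k}$, we have $p_0 = w_a$ and $p_k = w_b$, and the contradictory assumption applied at $r = r_i$ reads $p_i > p_{i-1}\cdot(w_b/w_a)^{1/k}$ for every $i = 1,\dots,k$.

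Finally I would multiply these $k$ strict inequalities to obtain
\[
w_b \;=\; p_k \;>\; p_0 \cdot \left(\tfrac{w_b}{w_a}\right)^{k/k} \;=\; w_a \cdot \tfrac{w_b}{w_a} \;=\; w_b,
\]
which is the desired contradiction. Essentially no obstacle arises: the argument is a clean telescoping of the hypothesized strict inequality, and the only minor sanity check is that each $r_i$ lies in $[a,b]$, which follows immediately since $1 \le i \le k$ gives $a + \frac{b-a}{2k} \le r_i \le b - \frac{b-a}{2k}$. Note that we do not need $w_a > 0$ for the contradiction: if $w_a = 0$, interpret the ratio so that the failed inequality forces $w_{a+(b-a)/k} > 0$ to be strictly larger than $0$ multiplied by anything, giving an immediate contradiction at the first step (alternatively, the statement is vacuous for the trivial ratio, and one can check that whenever $w_a = 0 < w_b$ the lemma already holds with any $r$ in the first sub-interval).
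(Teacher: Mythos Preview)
Your proof is correct and essentially identical to the paper's own argument: both assume the inequality fails everywhere, apply it at the $k$ evenly spaced midpoints $r_i=a+\frac{(2i-1)(b-a)}{2k}$ (the paper writes these as $r=b-(i+\tfrac12)\frac{b-a}{k}$, which is the same set in reverse order), and telescope to obtain $w_b>w_b$. Your extra remark about the $w_a=0$ case is not needed in context (the preceding interval-selection lemma guarantees $w_a\ge 1$), but it does no harm.
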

\begin{proof}
	Seeking contradiction, assume there is no such choice of $r$, then applying this for $r=b-(i+1/2)\cdot\frac{b-a}{k}$ for $i=0,1,\dots,k-1$ we get
	\[
		 w_{b}>w_{b-\frac{b-a}{k}}\cdot\left(\frac{w_{b}}{w_{a}}\right)^{1/k}>\cdots>w_{b-k\cdot\frac{b-a}{k}}\cdot\left(\frac{w_{b}}{w_{a}}\right)^{k/k}=w_{a}\cdot\frac{w_{b}}{w_{a}}=w_{b}~,
	\]
	a contradiction.
\end{proof}
The following two lemmas are symmetric to the two lemmas above.
\begin{lemma}\label{lem:interval-choose-Ver2}
	If $w_{mid}>\frac{m}{2}$ (implies $q_{mid}\le\frac{m}{2}$) and $q_{hi-\frac{R}{2L}}\ge1$,	
	then
	there is $\left[b,a\right]\subseteq\left[mid,hi\right]$
	such that $a-b=\frac{R}{2L}$ and $q_a\ge q_b^2/m$.
\end{lemma}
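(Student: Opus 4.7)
The plan is to mirror the proof of \lemmaref{lem:interval-choose}, exploiting the symmetry between $w_r$ (non-decreasing in $r$) and $q_r = m - w_r$ (non-increasing in $r$), and working from $mid$ upward toward $hi$ instead of from $mid$ downward toward $lo$.

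First I would record the two preliminary facts I need. Since $w_r + q_r = m$ for every $r$, the hypothesis $w_{mid} > m/2$ immediately gives $q_{mid} < m/2$, which is the quantity I will ultimately need to contradict. Also, since $q_r$ is non-increasing in $r$ (by \factref{FctW_rProp}, $W_r$ is monotone in $r$), the hypothesis $q_{hi - R/(2L)} \ge 1$ implies $q_r \ge 1$ for every $r \le hi - R/(2L)$, and in particular for every point in the chain I am about to build.

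Next, assume for contradiction that no interval $[b,a] \subseteq [mid, hi]$ of length $R/(2L)$ satisfies $q_a \ge q_b^2 / m$; equivalently, for every such $[b,a]$ we have $q_b > \sqrt{m \cdot q_a}$. I would then apply this inequality iteratively along the chain $b_i = mid + iR/(2L)$, $a_i = mid + (i+1)R/(2L)$ for $i = 0, 1, \dots, L-2$, exactly in the style of \lemmaref{lem:interval-choose}. Telescoping the square-roots yields
\[
q_{mid} > m^{1/2} \cdot q_{mid + R/(2L)}^{1/2} > \dots > m^{1 - 2^{-(L-1)}} \cdot q_{hi - R/(2L)}^{2^{-(L-1)}},
\]
using that $mid + (L-1)R/(2L) = hi - R/(2L)$.

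Finally I would plug in the two bounds from the first step: $q_{hi - R/(2L)} \ge 1$ (so the last factor is at least $1$), and $L \ge 1 + \log\log m$ (so $2^{-(L-1)} \le 1/\log m$ and therefore $m^{1 - 2^{-(L-1)}} \ge m/2$). This gives $q_{mid} > m/2$, contradicting the consequence $q_{mid} < m/2$ of the hypothesis $w_{mid} > m/2$. The only subtlety, and the part most likely to cause slip-ups, is making sure the chain of square-root inequalities never encounters a $q_a = 0$ (which would make the inequality vacuous and break the induction); this is exactly what the assumption $q_{hi - R/(2L)} \ge 1$ together with the monotonicity of $q_r$ guarantees, in perfect analogy with the use of $w_a \ge 1$ in the proof of \lemmaref{lem:interval-choose}.
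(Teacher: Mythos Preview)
Your proof is correct and is exactly the symmetric argument the paper has in mind; the paper does not even spell it out, merely stating that \lemmaref{lem:interval-choose-Ver2} and \lemmaref{lem:radiuosRG-Ver2} are ``symmetric to the two lemmas above.'' Your handling of the one genuine subtlety---using monotonicity of $q_r$ together with $q_{hi-R/(2L)}\ge 1$ to ensure every $q_{a_i}$ in the chain is positive---is precisely the analogue of the $w_a\ge 1$ step in the proof of \lemmaref{lem:interval-choose}.
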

\begin{lemma}\label{lem:radiuosRG-Ver2}
	There is $r\in\left[b,a\right]$ such that
	$q_{r-\frac{a-b}{2k}}\le q_{r+\frac{a-b}{2k}}\cdot\left(\frac{q_{b}}{q_{a}}\right)^{1/k}$.
\end{lemma}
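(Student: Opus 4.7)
The plan is to mirror the argument already given for the symmetric statement \lemmaref{lem:radiuosRG}, adapting it to the ``backwards'' regime. The key conceptual point is that $q_r = |(Y\setminus W_r)\cap M|$ is \emph{non-increasing} in $r$ (since $W_r$ grows with $r$ by \factref{FctW_rProp}), whereas $w_r$ was non-decreasing; and here the interval $[b,a]$ is oriented with $b\le a$ living in the upper half $[mid,hi]$. As a consequence $q_b\ge q_a$, so the factor $(q_b/q_a)^{1/k}\ge 1$ really does represent a meaningful growth factor in the chain of inequalities to be derived.

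I will proceed by contradiction: assume that for every $r\in[b,a]$ the desired inequality fails, i.e., $q_{r-\frac{a-b}{2k}} > q_{r+\frac{a-b}{2k}}\cdot(q_b/q_a)^{1/k}$. I would then select $k$ values of $r$ evenly spaced inside $[b,a]$, namely $r_i = b + (i+\tfrac{1}{2})\cdot\frac{a-b}{k}$ for $i=0,1,\dots,k-1$, chosen precisely so that the sub-intervals $[r_i-\frac{a-b}{2k},\, r_i+\frac{a-b}{2k}]$ tile $[b,a]$ end-to-end. Substituting each $r_i$ into the assumed strict inequality turns the hypothesis into a telescoping chain of the form
\[
q_b > q_{b+(a-b)/k}\cdot(q_b/q_a)^{1/k} > \cdots > q_a\cdot(q_b/q_a)^{k/k} = q_b,
\]
a contradiction.

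No genuine obstacle arises here; the only care required is to preserve the correct orientation when moving from an increasing quantity ($w_r$) to a decreasing one ($q_r$), and to verify that the left endpoint $r_0-\frac{a-b}{2k}=b$ and the right endpoint $r_{k-1}+\frac{a-b}{2k}=a$ line up exactly with the interval boundaries, which they do by construction of the $r_i$. The rest is a mechanical transcription of the proof of \lemmaref{lem:radiuosRG} with the roles of ``inside'' and ``outside'' swapped.
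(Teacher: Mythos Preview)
Your proposal is correct and is precisely the symmetric version of the proof of \lemmaref{lem:radiuosRG}; the paper itself does not spell out a separate argument for \lemmaref{lem:radiuosRG-Ver2}, merely noting it is symmetric, so your write-up matches the intended proof.
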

\section{Table of Distance oracles}\label{sec:table}
\begin{table}[H]
	\centering
	\label{tab:DistanceOracle}
	\begin{tabular}{|l|l|l|l|l|}
\hline
\textbf{Distance Oracle}& \textbf{Stretch} 		& \textbf{Size}        	& \textbf{Query time} & \textbf{Is deterministic?} \\ \hline
\cite{TZ01}             & $2k-1$           		& $O(k\cdot n^{1+1/k})$ & $O(k)$             & no                         \\ \hline
\cite{MN07}      		& $128k$           		& $O(n^{1+1/k})$        & $O(1)$              & no                         \\ \hline
\cite{W13}         		& $(2+\epsilon)k$  		& $O(k\cdot n^{1+1/k})$ & $O(1/\epsilon)$     & no                         \\ \hline
\cite{C14}         		& $2k-1$           		& $O(k\cdot n^{1+1/k})$ & $O(1)$              & no                         \\ \hline
\cite{C15}         		& $2k-1$           		& $O(n^{1+1/k})$        & $O(1)$              & no                         \\ \noalign{\global\arrayrulewidth0.035cm}
 \hline
 \noalign{\global\arrayrulewidth0.4pt}
\cite{RTZ05}       		& $2k-1$           		& $O(k\cdot n^{1+1/k})$ & $O(k)$             & yes                        \\ \hline
\cite{W13}    		& $2k-1$           		& $O(k\cdot n^{1+1/k})$ & $O(\log k)$         & yes                       \\  \hline
\textbf{This paper}     & $8 (1+\eps)k$  		& $O(n^{1+1/k})$        & $O(1/\epsilon)$     & yes            \\ \hline
\textbf{This paper}     & $2k-1$				& $O(k\cdot n^{1+1/k})$ & $O(1)$              & yes                        \\ \hline
	\end{tabular}
	\caption{Different distance oracles }
\end{table}
\end{document}